\documentclass{article}
\usepackage[utf8]{inputenc}
\usepackage[at]{easylist}
\usepackage{fullpage}
\usepackage{graphicx}
\usepackage{pdflscape}
\usepackage{float}
\usepackage{subcaption}

\usepackage[square,numbers]{natbib}
\bibliographystyle{abbrvnat}

\usepackage{amsmath,amsfonts}
\usepackage{amsthm}
\usepackage{bbold}

\usepackage{mathrsfs}
\usepackage{titlesec}
\usepackage{xspace}
\usepackage{hyperref}
\usepackage[dvipsnames]{xcolor}

\usepackage{booktabs}
\usepackage{multirow}
\usepackage{tabularx}
\usepackage{ltablex}
\usepackage{makecell}
\usepackage{array}

\usepackage{algorithm}
\usepackage{algorithmicx}
\usepackage[noend]{algpseudocode}

\keepXColumns



\newtheorem{theorem}{Theorem}[section]

\newtheorem{definition}[theorem]{Definition}

\newtheorem{corollary}[theorem]{Corollary}

\newcommand{\mathify}[1]{\ensuremath{#1}\xspace}

\newcommand{\adv}{\mathify{\mathcal{A}}} 
\newcommand{\mech}{\mathify{\mathcal{M}}} 
\newcommand{\ds}{\mathify{\mathbf{x}}} 
\newcommand{\dd}{\mathify{\mathcal{D}}} 
\newcommand{\E}{\mathify{\mathbb{E}}} 
\newcommand{\target}{\mathify{z}} 
\newcommand{\aux}{\mathify{\mathsf{aux}}} 
\newcommand{\AUX}{\mathify{\mathsf{AUX}}} 
\newcommand{\cX}{\mathify{\mathcal{X}}} 
\newcommand{\R}{\mathify{\mathbb{R}}} 
\newcommand{\eps}{\varepsilon} 
\newcommand{\unif}{\text{Unif}} 
\newcommand{\dirac}{\mathbb{1}} 


 

\begin{document}

\title{\textsc{Attaxonomy}: Unpacking Differential Privacy Guarantees Against Practical Adversaries\footnote{R.C. supported in part by NSF grants CNS-1942772 (CAREER) and EEC-2133516, and by DARPA under contract number W911NF-21-1-0371. S.H. supported in part by DARPA under Agreement No. HR00112020021. Part of this work completed while S.H. was visiting Columbia University. J.S. supported in part by DARPA under contract number W911NF-21-1-0371. M.S. supported in part by NSF awards CNS-2120667 and CNS-2232694. Any opinions, findings, and conclusions or recommendations expressed in this material are those of the authors and do not necessarily reflect the views of the United States Government or DARPA.}}
\author{Rachel Cummings\footnotemark[2] \and Shlomi Hod\footnotemark[3] \and Jayshree Sarathy\footnotemark[2] \and Marika Swanberg\footnotemark[3]}

\renewcommand{\thefootnote}{\fnsymbol{footnote}}
\footnotetext[2]{Columbia University. Emails: \texttt{\{rac2239,js6514\}@columbia.edu}}
\footnotetext[3]{Boston University. Emails: \texttt{\{shlomi,marikas\}@bu.edu}.}

\date{}

\renewcommand{\thefootnote}{\arabic{footnote}}

\maketitle

\begin{abstract}
Differential Privacy (DP) is a mathematical framework that is increasingly deployed to mitigate privacy risks associated with machine learning and statistical analyses. Despite the growing adoption of DP, its 
technical
privacy parameters do not lend themselves to an intelligible description of the real-world privacy risks associated with that deployment: the 
guarantee 
that most naturally follows from the DP definition is protection against membership inference by an adversary who knows all but one data record and has unlimited auxiliary knowledge. In many settings, this adversary is far too strong to inform how to set real-world privacy parameters.
One approach for contextualizing privacy parameters is via defining and measuring the success of technical attacks, but doing so requires a systematic categorization of the relevant attack space. 
In this work, we offer a detailed taxonomy of attacks, showing the various dimensions of attacks and highlighting that many real-world settings have been understudied. Our taxonomy provides a roadmap for analyzing real-world deployments and developing theoretical bounds for more informative privacy attacks.
We then operationalize our taxonomy by using it to analyze a real-world case study, the Israeli Ministry of Health's recent release of a birth dataset using DP \cite{hod2024births}, showing how the taxonomy enables fine-grained threat modeling and provides insight towards making informed privacy parameter choices.
Finally, we leverage the taxonomy towards defining a more realistic attack than previously considered in the literature, namely a \emph{distributional} reconstruction attack: 
we generalize Balle et al.'s notion of reconstruction robustness \citep{balle2022reconstructing} to a less-informed adversary with distributional uncertainty, and extend the worst-case guarantees of DP to this average-case setting. 
\end{abstract}

\section{Introduction}

Data privacy is an ongoing and growing concern for data stewards and data subjects. Decades of attacks on seemingly anonymized datasets have highlighted the need for more rigorous privacy protections. Differential privacy (DP) has become a widely accepted solution for privacy-preserving analysis of sensitive data \citep{Dwork2006CalibratingNT}, and has been deployed across a range of institutions and applications \citep{Abowd2018TheUC, miklau2022negotiating, burman2019safely, Wilson2019DifferentiallyPS,fitzpatrick2020,hod2024births}. However, these deployments have highlighted the difficulty of setting appropriate DP parameters that ensure data are adequately protected and that the statistical releases still remain useful for analysis. One leading challenge in practical deployments of DP is that the parameter $\epsilon$ -- which governs the strength of the privacy guarantees -- is unitless, contextless, and does not directly align with more traditional security goals, such as protection against attacks.

The definition of DP ensures that one individual's data can have a limited impact on the outcome of the analysis, even if all the other entries in the database were known. This most naturally accords protection against a \emph{membership inference attack}, where an adversary infers a single bit: whether a known individual was in the dataset, or whether a given sample was included in the training set of an ML model. 
However, for many settings membership inference may not be a compelling attack and does not necessarily constitute a meaningful privacy breach in practice. 

A flurry of recent work has aimed to address a more meaningful \emph{data reconstruction attack}, where the adversary aims to recover the sensitive attributes of an individual in the dataset. In many settings, this attack setting is better aligned with the privacy concerns of individuals and organizations. Additionally, while the guarantees of protection against fully-informed \emph{worst-case} attackers is a valuable feature of DP's strong privacy guarantees, it does not articulate improvements in protection against more practical \emph{average-case} or less-informed attackers.

Formalizing the level of protection against attacks in terms of the DP's privacy parameter $\epsilon$ is a surprisingly complex task, due to the high-dimensional feature space of attacks, including aspects such as the type of attack, attacker's auxiliary information, success metrics, and many more. Our first main contribution in Section \ref{s.taxonomy} is developing a taxonomy of attacks by identifying the relevant features and characterizing existing attack papers in terms of this taxonomy. 
Beyond prior surveys of attacks~\cite{Rigaki2020ASO,salem2023game}, our taxonomy considers several dimensions of attacks that have been previously understudied, thus enabling robust and expressive threat modeling and grounding discussions around privacy risks.

We apply our taxonomy in Section \ref{sec:case-study} to describe and reason about privacy threats in a recent, high-profile DP deployment: the release of a birth dataset by Israel's Ministry of Health in February 2024. We show how our taxonomy can be used to do fine-grained threat modeling, reason about privacy risks, and enable system designers to make more informed choices about privacy parameters that take into account stakeholder needs and risk assessments.
This practical application of our taxonomy also demonstrates the need for defining more realistic privacy attacks, such as reconstruction attacks against adversaries who are less-informed, and developing \emph{average-case} rather than worst-case protection guarantees.

Finally, in Section \ref{s.reconstruction}, we leverage our taxonomy 
by characterizing the protection afforded by DP against reconstruction attacks from less-informed adversaries. In particular, we build on the notion of \emph{reconstruction robustness} (ReRo) \cite{balle2022reconstructing} that measures the level of protection against reconstruction attacks from worst-case adversaries with full knowledge of the rest of the database. We develop a related notion of \emph{distributional reconstruction robustness} (DistReRo), that considers a more realistic adversary with only distributional knowledge of the rest of the database. We fully characterize the relationships between DP, DistReRo, and ReRo, to provide meaningful theoretical guarantees of protection against this family of attacks.

We believe these results provide an important step towards communicating the real-world protections provided by differentially private systems, and for developing useful metrics for setting privacy parameters in practice.

\subsection{Background and Related Work}
At a high level, DP \cite{Dwork2006CalibratingNT} requires that a mechanism has a similar distribution over outputs on all pairs of datasets that differ in one user's contribution. Formally, a mechanism $\mech$ is $(\eps, \delta)$-DP if for all datasets $\ds, \ds'$  differing in one user's data and for all subsets $S$ of the mechanism's output range,
$$
\Pr(\mech(\ds) \in S) \leq e^\eps \cdot \Pr(\mech(\ds') \in S) + \delta.
$$
The $\epsilon$ parameter is considered the leading privacy parameter; when the $\delta$ parameter is set to 0, the mechanism is referred to as $\eps$-DP. 
The definition of DP most naturally gives protection against a membership inference attack, but membership status is not always a sensitive attribute. Below, we describe literature that surveys a broader set of attacks related to DP as well as recent work on reconstruction attacks.

\paragraph{Taxonomies of privacy attacks.}
Our paper expands on prior work that aims to systematize privacy attacks and how they relate to DP.
In 2017, \citet{Dwork2017ExposedAS} provided a survey of adversarial attack goals on aggregate releases of private data. These are: \emph{re-identification}, or linking an individual record back to a person (e.g., \cite{sweeney1997weaving,narayanan2008robust}), \emph{reconstruction}, which is recovering the sensitive attributes for one or more individuals in the dataset (e.g.,~\cite{Dinur2003RevealingIW,abowd20232010}), and \emph{membership inference (tracing)}, which involves determining whether or not an individual was included in the dataset (e.g.,~\cite{homer2008resolving}).
More recently, \citet{Rigaki2020ASO} offered a detailed account of attacks in machine learning, considering not just adversarial goals but also \emph{adversarial knowledge} (black-box or white-box), \emph{type of ML model}, and \emph{timing of the attack} (during training or during inference). Rigaki and Garcia categorized over 40 papers from 2014-2020 along these dimensions.
\citet{salem2023game} formalized privacy attacks as a cryptographic-inspired games and provides a supplementary taxonomy of game elements.
Our work expands on these three surveys by considering several other dimensions that have been previously understudied, yet are critical to defining practical adversaries. 
Other works have characterized the space of DP definitions (e.g.,~\cite{desfontaines2019sok}), which is complementary to our work but outside the scope of our `attaxonomy.'

\paragraph{Recent work on reconstruction attacks.} One aim of our categorization is to systematize the recent line of work on reconstruction attacks.
Semantic guarantees around reconstruction attacks have become a renewed area of interest; 
much of the existing semantic guarantees and theoretical bounds in the DP literature focuses on membership inference, although this attack model is only meaningful when membership in a dataset is sensitive. 

The results of \citet{Dinur2003RevealingIW} motivated the definition of DP and provided a foundation for rigorously quantifying reconstruction bounds for general query release mechanisms. More recently, \citet{guo2022bounding} showed that R\'enyi DP and Fisher information both provide strong semantic guarantees against reconstruction attacks.
\citet{balle2022reconstructing} 
introduced the notion of \emph{reconstruction robustness} (ReRo),
providing a connection between reconstruction attacks and DP under an adversary that is highly informed and has a relatively simple task. In particular, the adversary has full knowledge of all-but-one record in the database, observes the output of an $(\eps, \delta)$-DP algorithm, and must guess the value of the missing data point. Balle et al.~showed that any reconstruction attack using a DP output succeeds with probability at most $e^{\epsilon}$ times higher than then base rate success, where the adversary only had the rest of the dataset and knew that the missing datapoint was sampled from some prior. 
\citet{hayes2023bounding} extended this result to specifically study reconstruction attacks against DP-SGD. The adversarial model is the same as in \cite{balle2022reconstructing}, and this time the output of the DP algorithm is the transcript of $T$ private gradient update steps of DP-SGD. \citet{kaissis2023bounding}  built on this line of work by studying the relationship of hypothesis testing interpretation of DP (specifically $f$-DP) to reconstruction robustness and efficient computational methods for estimating the ReRo guarantees for a given mechanism.

Overall, these recent works consider an adversarial model where the entire rest of the database is known and the adversary must make a determination about one unknown point (either membership in the database for membership inference attacks, or its value in reconstruction attacks). The reason for focusing on such models is that they are closer to the definition of DP and thus enable guarantees that follow immediately from it. What is less explored---as illustrated by our taxonomy---are attacks where the adversary has some partial or incomplete information about the rest of the database and must make inferences about a large number of remaining database entries. This is a natural setting for attacks applied to, e.g., reconstructing the US Decennial Census \cite{dick2023unified}. In addition, distributional uncertainty has been leveraged by \citet{steinke2024privacy} in the context of membership inference for the purpose of efficient privacy auditing.

\subsection{Our Contributions}

\paragraph{Proposing a taxonomy to characterize privacy attacks.}
In Section~\ref{s.taxonomy}, we present a detailed hierarchical taxonomy to characterize the space of privacy attacks along multiple dimensions.
We group these dimensions into higher level 
roles: \textsc{crafter} (Section~\ref{sec:crafter}), \textsc{adversary} (Section~\ref{sec:attacker}) and \textsc{evaluator} (Section~\ref{sec:evaluator}), corresponding to their relevance in the attack pipeline. The complete taxonomy is presented in Table~\ref{tab:taxonomy}. We also demonstrate how a wide variety of recent papers on privacy attacks can be categorized via the taxonomy in Figure~\ref{fig:taxonomy}. The taxonomy can serve as a map to the privacy attack space, and assist in the process of threat modeling for setting privacy parameters.

\paragraph{Using our taxonomy to analyze Israel's Birth Dataset Release.}
To demonstrate our taxonomy in action, we 
use it to describe the privacy risks involved in a differentially private microdata release of sensitive birth data from Israel's Ministry of Health. In Section~\ref{sec:case-study}, we demonstrate how to model threats and attacks in term of the Crafter, Attacker, and Evaluator roles in this real-world setting (Section~\ref{sec:threat-modeling}), summarized in Figure~\ref{fig:case-study}. Then, in Section~\ref{sec:informing}, we discuss how this fine-grained threat modeling can help policymakers start reasoning about setting $\eps$.

\paragraph{Defining and characterizing reconstruction success against distribution-only adversaries.} In Section \ref{s.reconstruction}, we utilize our taxonomy to consider new notions of reconstruction robustness (ReRo, Definition \ref{def:rero}) \cite{balle2022reconstructing}. In Algorithm~\ref{alg:less_informed_recon}, we relax ReRo by giving the adversary only \emph{population-level information} about the dataset rather than assuming the adversary has access to all-but-one input record. The adversary produces a single guess $\hat{\target}$ of the reconstruction target record. We consider two ways of measuring reconstruction success with respect to $\hat{\target}$: first, the loss of $\hat{\target}$ compared to a uniform record from the input dataset (Definition~\ref{def:avgDRR}); second, the loss of $\hat{\target}$ against the \emph{most similar} record from the true dataset (Definition~\ref{def:BCRR}). We relate our probability of successful reconstruction $\gamma$ to ReRo (Theorems~\ref{thm:rero_to_distrero} and \ref{thm:rero_to_bcdistrero}), and as a consequence prove bounds on reconstruction success rates for $\eps$-DP mechanisms (Corollaries~\ref{thm:DP_to_distrero} and \ref{thm:DP_to_bcdistrero}). We also prove that our new notion is a strict relaxation of ReRo (Theorem~\ref{thm:separation}). These results are summarized in Figure~\ref{fig:triangle}.

\section{Characterizing the space of attacks: Our Taxonomy}\label{s.taxonomy}

In this section, we introduce our taxonomy over different types of privacy attacks. 
To inform the process of setting DP parameters such as $\eps, \delta$ and $\rho$, practitioners first need a better understanding of how these parameters modulate the success of different privacy attacks.
We show that the space of possible attacks is combinatorially large, and many attack configurations have not been previously studied.

We propose a taxonomy designed to systematically characterize privacy attacks. Based on analysis of the attacks reviewed in \citet{salem2023game} and recent work \cite{hayes2023bounding,cohen2020singling,balle2022reconstructing,Dick2022ConfidenceRankedRO,guo2022bounding,Nasr2021AdversaryIL}, we identified dimensions to specify relevant features of privacy attacks, few of which overlapped with those identified in \citep{salem2023game}. Furthermore, we introduced a higher level in the taxonomy to cluster related dimensions. Although our taxonomy is not exhaustive, it offers a framework for explicitly capturing large classes of attacks.

In our taxonomy, an attack is described according to three roles, inspired from cryptography and language used in \cite{Nasr2021AdversaryIL}: a \textsc{Crafter} that constructs the setting (e.g., How is the dataset generated? What is the attack target?), an \textsc{Attacker} that perform the attack (e.g., What is its goal? What is its auxiliary knowledge?), and an \textsc{Evaluator} that assess the adversary's success (e.g., Which metric is used? What is the baseline for comparison?). Each role has multiple \emph{dimensions} that describe one aspect of the attack, each of which can take on one of many possible \emph{options}. These dimensions and options are explained and enumerated in this section.

Table~\ref{tab:taxonomy} presents our complete taxonomy over its three levels: roles, dimensions and options. Within each dimension category, we generally organize the options in decreasing order of adversary strength/knowledge, where applicable. In some cases, the ordering may not be well-defined.  Figure~\ref{fig:taxonomy} visualizes this taxonomy and uses it to characterize some known attacks from the recent literature.

\subsection{Crafter}
\label{sec:crafter}

\textbf{Dataset Generation.} This describes whether the Crafter constructs a dataset or draws the dataset from some distribution. A worst-case privacy guarantee that holds over all datasets would be encoded in the privacy game by allowing the Crafter to construct a (worst-case) dataset. On the other hand, a privacy guarantee that only holds probabilistically over the data generation process (either in expectation, with high probability, or by some other measure) would use a dataset that is drawn from a distribution.

\textbf{Privacy Unit.} This describes whether entire groups of individuals, single individuals, or only single interaction events are protected by the privacy guarantee (analogous to group-level, user-level, and event-level privacy notions in the DP literature). In some special cases, e.g.,  where users are guaranteed to only have one event each, these notions may collapse.

\textbf{Target Source.} This parameter determines whether the target group/person/event is crafted (i.e., potentially a worst-case example) or is drawn from some distribution. In the latter case, the privacy guarantee of the game may only hold probabilistically over targets. 

\subsection{Attacker}\label{sec:attacker}

\textbf{Access to mechanism.} This describes whether the adversary has access to intermediate computations of the mechanism (white box), a description of the mechanism's output (black box), simply query access to the output, or no access at all to any mechanism's output. 
For example, in the context of machine learning, a white box attacker sees intermediate gradient computations for each batch during training, a black box adversary only sees the final model parameters after training, and a query access adversary gets query access to the model (e.g., via inference API queries). Notably, even stronger notions of access have been studied, for example \cite{Nasr2021AdversaryIL}, which considers an adversary who can arbitrarily manipulate intermediate gradients from DP-SGD before clipping and noise-addition in their ``gradient attack'' model. 

\textbf{Population-level auxiliary information.} This describes how much general prior knowledge the adversary has about the population: a partial or full description of the population (e.g., just the population mean, or the entire distribution), random samples from the data distribution $\dd$, a partial or full description of some approximation of $\dd$, or only the data schema (which contains the \emph{type} of each column). Typically this is information the adversary has \emph{prior to seeing the mechanism output}. 

\textbf{Dataset-level auxiliary information.} This dimension characterizes how much of the input data set the adversary has access to: a chosen subset of rows/columns, a random subset of rows/columns, or nothing. In the case of membership inference, the adversary typically has access to all but one data record. For attribute inference, the adversary typically has access to some features/columns of the data set. This dimension can be further broken down into how many of the rows or columns the adversary has access to. 

\textbf{Attack goal.} This describes the overall goal of the adversary: membership inference, attribute inference, singling out, or reconstruction. This parameter influences the overall structure of the adversarial game.

\subsection{Evaluator}
\label{sec:evaluator}

The \textsc{Evaluator} role is distinct from the two preceding roles. In this role, the dimensions of the evaluation can either be complex and rich (baseline attacker), or they may not be predefined, instead depending on the specifics of the attack (success metric).

\textbf{Baseline Attacker.} If applicable, this describes the baseline used for comparison of the informed adversary. In other words, how does one determine whether the adversary's success is significant and dependent on access to the private mechanism (rather than simply using prior information)? The baseline is conceptualized as another adversary, labeled as ``uninformed'' in some respect, often by excluding its access to the mechanism's computation. This conceptualization allows the baseline adversary to also be described through the \textsc{Crafter} and \textsc{Attacker} dimensions. Instead of listing all dimension values, we opt for more succinct representations, where only the modified dimensions with respect to the informed adversary are specified, for instance, ``access to mechanism: none''.

\textbf{Success Metric.} This determines how the adversary's success is measured. The choice of measure is highly context-dependent, with many variations. Some examples of metrics include: the mean squared error to the target (for reconstruction attacks), the true positive rate at a given false positive rate (for membership inference), and the probability of an exact match (for attribute inference). Moreover, the metric may measure an adversary's success \emph{on a given run of the game} or it may be a \emph{summary} over many iterations of the game. Additionally, some metrics may be \emph{per-individual} or a \emph{summary over the whole dataset}.

{\small

\begin{table*}
\centering
\caption{Taxonomy of privacy attacks across its three levels (roles, dimensions and options).}
\begin{tabularx}{\linewidth}{@{}c >{\raggedright\arraybackslash}p{2.5cm} X X@{}}
    \toprule
    & \textbf{Dimension} & \textbf{Description} & \makecell[l]{\textbf{Options/Settings}\\Listed from strongest to weakest} \\
    \midrule
    
    {\multirow{3}{*}[-3em]{\rotatebox[origin=c]{90}{\textsc{\large Crafter}}}} & Dataset Generation & 
    Does the privacy guarantee hold for all data sets, or probabilistically over the data generation process? & 
    \begin{tabular}[t]{@{}l@{}}Dataset chosen/constructed\\Dataset drawn from a distribution\end{tabular}
    \\

    \addlinespace[1em]
    
    & Privacy Unit & 
    How many individuals (and how many of their actions) are protected by the privacy guarantee? & 
    \begin{tabular}[t]{@{}l@{}}Group\\Individual\\Event/Item \end{tabular}
    \\
    
    \addlinespace[1em]
    
    & Target Source & 
    Does the privacy guarantee hold for all target individuals, or probabilistically? & 
    \begin{tabular}[t]{@{}l@{}}Target is chosen/constructed\\Target is drawn\end{tabular}
    \\
    
    \addlinespace[0.5em]
    \cmidrule{2-4}
    \addlinespace[0.5em]

    {\multirow{4}{*}[-7em]{\rotatebox[origin=c]{90}{\textsc{\large Attacker}}}} & Access to mechanism & 
    Does the adversary have adaptive access to modifying intermediate computations, non-adaptive access to intermediate computations (e.g., gradients), entire outputs (e.g., model descriptions), or simply queries to the mechanism output? & 
     \begin{tabular}[t]{@{}l@{}}Adaptive access\\Whitebox\\Blackbox\\Query access (API)\\None\end{tabular}
   \\

    \addlinespace[1em]
    
    & Population-level auxiliary information & 
    How much information does the adversary have about the data population? & 
    \begin{tabular}[t]{@{}l@{}}Partial or full description of $\mathcal{D}$\\Random samples from $\mathcal{D}$\\Partial or full description of $\widehat{\mathcal{D}}\approx \mathcal{D}$\\Data schema\end{tabular}
    \\

    \addlinespace[1em]
    
    & Dataset-level auxiliary information & 
    How much information does the adversary have about the dataset? & 
    \begin{tabular}[t]{@{}l@{}}Chosen subsample of dataset\\Random subsample of dataset\\None\end{tabular}
    \\
    \addlinespace[1em]
    & Attack goal & 
    What is the adversary's goal? & 
    \begin{tabular}[t]{@{}l@{}}Membership Inference\\Attribute Inference\\Singling Out\\Reconstruction\end{tabular}
    \\
    
    \addlinespace[0.5em]
    \cmidrule{2-4}
    \addlinespace[0.5em]

    {\multirow{2}{*}[-1em]{\rotatebox[origin=c]{90}{\textsc{\large Evaluator}}}} & Baseline & 
    What information does an ``uninformed'' baseline adversary have? &
    Specified in terms of the dimensions of the \textsc{Carfter} and \textsc{Attacker}. Often, a baseline adversary differs only in the \emph{access to mechanism} that is set to None.
    \\

    \addlinespace[1em]

    & Success Metric & 
    How is the adversary's success measured? & 
    \begin{tabular}[t]{@{}l@{}}Defined per attack, here are few examples:\\Mean Squared Error to target (for reconstruction)\\TPR at low FPR (for membership inference)\\Probability of exact match (for attribute inference)\end{tabular}
    \\
    \bottomrule
\end{tabularx}
\label{tab:taxonomy}
\end{table*}
}

\begin{figure}[H]
    \centering
    \caption{Visualization attacks with the taxonomy of the privacy attack taxonomy. The dimensions are represented as vertical bars and grouped according to their associated roles. Attacks are depicted as lines that cross all dimensions.}
    \vspace{-2ex}
    \begin{subfigure}{0.97\linewidth}
        \centering\includegraphics[width=\linewidth]{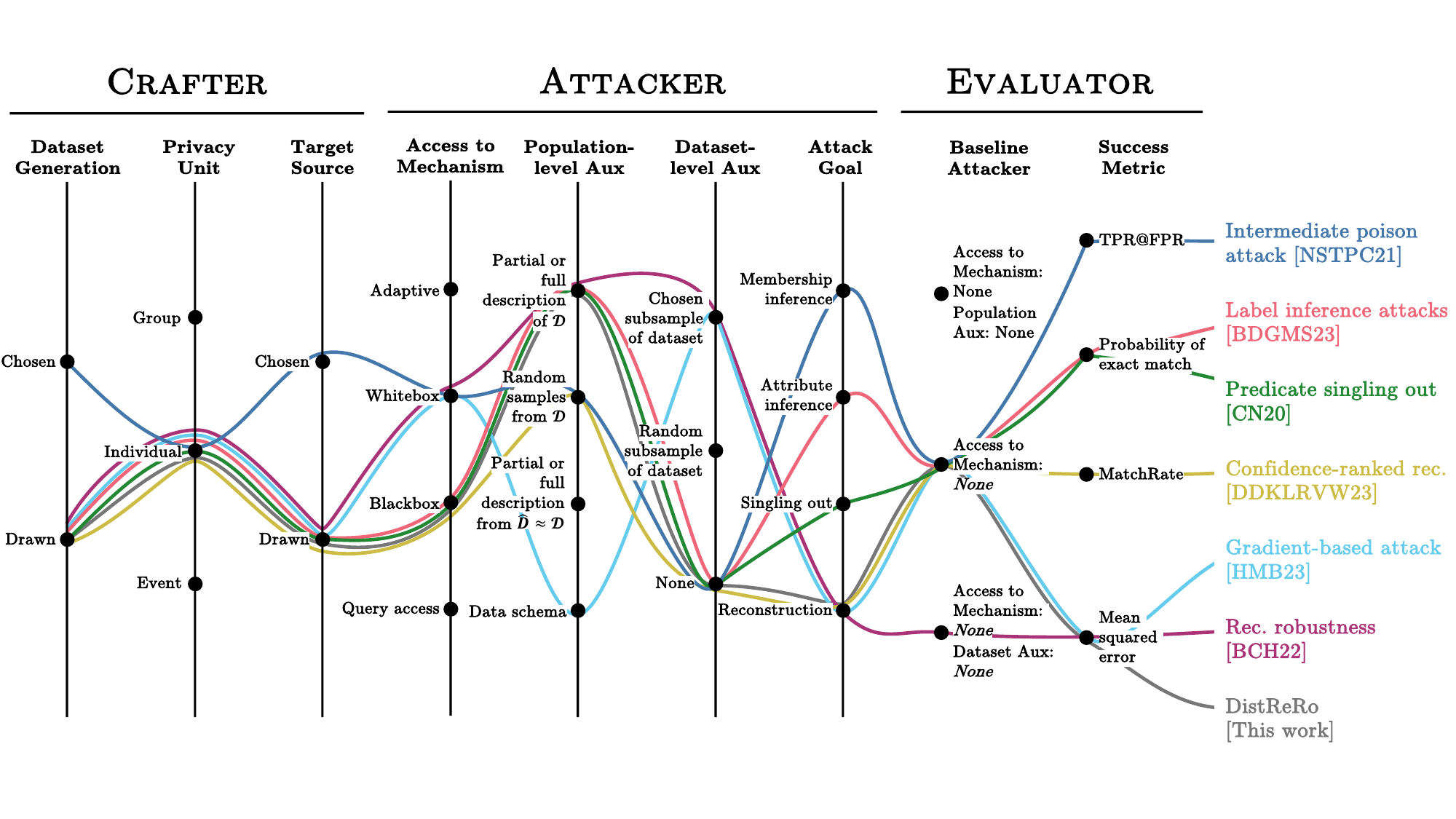}
        \vspace{-4ex}
        \caption{Illustration of how seven different empirical attacks from the literature are described using our taxonomy: \textcolor[HTML]{4477AA}{intermediate poison attack} \cite{Nasr2021AdversaryIL}, 
        \textcolor[HTML]{EE6677}{label inference attacks} \cite{dick2023unified},
        \textcolor[HTML]{228833}{predicate singling out} \cite{cohen2020singling}, 
         \textcolor[HTML]{CCBB44}{confidence-ranked reconstruction inference attacks} \cite{Dick2022ConfidenceRankedRO}, \textcolor[HTML]{66CCEE}{gradient-based attack} \cite{hayes2023bounding}, \textcolor[HTML]{AA4477}{reconstruction robustness} \cite{balle2022reconstructing}, and  \textcolor[HTML]{777777}{distributional reconstruction robustness} [This work].}
        \label{fig:taxonomy}
    \end{subfigure}

    \begin{subfigure}{\textwidth}
        \centering
        \includegraphics[width=0.97\linewidth]{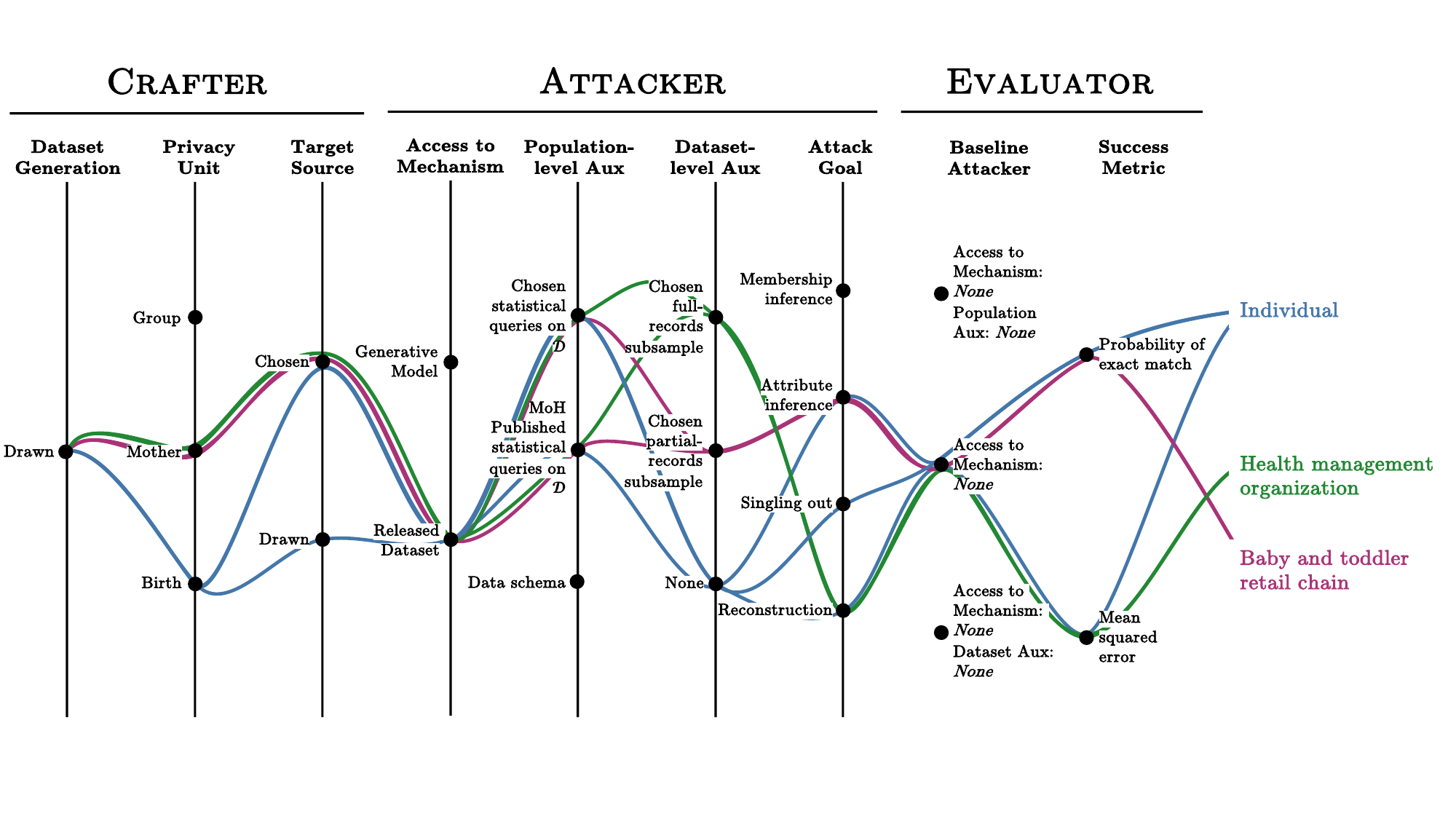}
        \vspace{-4ex}
        \caption{Grounded version of the privacy attack taxonomy within the birth dataset case study (Section~\ref{sec:case-study}). The three scenarios presented in the threat modeling covering three actors: \textcolor[HTML]{4477AA}{individual}, \textcolor[HTML]{228833}{health management organization}, and  \textcolor[HTML]{AA3377}{baby and toddler retail chain}. Multiple same-color lines represents a combinatorial spanning of attacks, i.e., every end-to-end path is considered as an attack.}
        \label{fig:case-study}
    \end{subfigure}

\end{figure}

\section{Case Study: Taxonomy in Action}
\label{sec:case-study}

In this section we will demonstrate how the taxonomy provides a language to reason about and articulate privacy threats within a specific context. 
We consider a recent DP data release as our case-study: 
in February 2024, Israel's Ministry of Health released DP microdata from the National Registry of Live Birth \cite{hod2024births,moh2024release}. This release was co-designed with various stakeholders in order to offer substantial value to multiple sectors of society and use-cases, such as scientific research and policy-making, while also protecting the privacy of mothers and newborns.

Using the taxonomy, we are able to reason about privacy threats related to this release in a more systematic manner. 
Importantly, we believe that a systematic approach is not one that is universal, but rather that is highly attentive to context.
Thus, our application highlights how the taxonomy, \emph{by design}, 
encourages users to engage with the complexities of each real-world deployment by carefully thinking through whether the provided categories in each dimension make sense, and modifying as necessary. Surfacing these complexities is necessary for enabling meaningful discussions with stakeholders about privacy threats, and eventually, to make informed choices about privacy parameters. 

Thus, we stress that our goal in this case study is neither to provide complete threat modeling for the scenario nor prescribe an appropriate $\eps$ for the release, but rather to illustrate the expressive power of our taxonomy of attacks for \emph{informing} the processes of threat modeling and parameter selection.
In addition, we focus solely on privacy risks associated with the output of a computation (i.e. output privacy), whereas other security risks (e.g. secure data storage) are out of scope for this work.

We begin by describing the released (privatized, synthetic) microdata and how it was generated, as well as auxiliary datasets and statistics published by the Ministry of Health (Section~\ref{sec:release}). We then show how our taxonomy can be applied when considering three natural real-world attackers in this setting (Section~\ref{sec:threat-modeling}). Finally, we discuss how the formalization of these threats, or attacks, can be used to support the process of choosing appropriate DP parameters and expand existing frameworks of risk assessments (Section~\ref{sec:informing}).

\subsection{Microdata Release}
\label{sec:release}

  The release includes a microdata file and accompanying documentation for data users and subjects. The microdata, representing approximately 166K records of single live births from 2014 (defined as births involving a single fetus displaying signs of life post-separation from the mother), contains six attributes: birth month, mother's age, parity (number of live births a mother has had), gestation week, sex assigned at birth, and newborn birth weight.
    
\textbf{Released microdata and its generation.} 
The microdata was produced using the DP PrivBayes algorithm \cite{Zhang2014PrivBayesPD} to generate synthetic data that met certain quality guarantees.
A list of seven quality criteria for accepting the produced microdata required that the accuracy of various stakeholder-defined statistical queries on the released dataset be sufficiently close to their values on the original dataset. All parts of the computation were done with DP, to provide end-to-end formal privacy guarantees with an overall privacy budget of $\eps = 9.98$. 
All computations on the original dataset were performed within a secure enclave environment manged by Ministry of Health, so that only the final synthetic dataset and the results of the quality criteria measurements are exported.

\textbf{Auxiliary data publications.} The DP release does not occur in a vacuum; various other pieces of information about the Registry were separately available to the public or specific parties.  The Central Bureau of Statistics and the Ministry of Health publish annual aggregated statistics about births in Israel without DP, including all 1-way marginals and some 2-way marginals, albeit with reduced resolution in some of the variables (e.g., the gestation week column is treated as a binary variable with values ``$<37$'' and ``$\geq37$''), whereas in the DP release, the gestation week variable has six possible values, representing finer-grained categories).

\subsection{Threat Modeling}
\label{sec:threat-modeling}

Our objective was to generate a detailed list of potential attacks as specified by our taxonomy that are grounded in the case study of Israel's live birth dataset release. We generated and tuned the specific values in the taxonomy's dimensions to this case study and execute threat modeling accordingly. 
Below, we describe and justify our choices for each dimension.
Figure~\ref{fig:case-study} displays the final results of this process. 

Our case study highlights the fact that some of the natural attackers we consider have not been formally studied in relation to DP; we formally address and close some of these gaps in Section~\ref{s.reconstruction}.

\paragraph{\textsc{Crafter}.} First, note that by design the underlying data set \emph{cannot} be chosen or constructed by an attacker; thus, we consider the dataset to be \emph{drawn}. 
Each row of the dataset concerns a mother-newborn pair, which one could consider a ``birth event''. This is the most obvious privacy unit to consider, however one could also consider protecting individuals (i.e., mothers with two singleton birth events in 2014), or groups of mother-newborn pairs. For this case study, we will consider both single birth events and mothers to be the privacy units. Lastly, we include two possible values of the target source: \emph{chosen} (recovering information on a specific birth event) and \emph{drawn} (recovering information on any birth event). The former corresponds to a private investigator or a curious acquaintance, while the latter to a journalist or privacy researcher. These roles are inspired from the statistical disclosure control literature \cite{WillenborgWaal1996,Duncan2011Statistical,Bargh2020Statistical}, and we consider both. 

In summary, we consider the following attack settings: \textbf{dataset generation}: drawn; \textbf{privacy unit}: mother-newborn (birth event), mother; \textbf{target source}: chosen, drawn. 

\paragraph{\textsc{Attacker}.}
We consider three classes of attackers with differing amounts of knowledge and goals. Before describing specific attackers, we assume that only the Ministry of Health has access to the generative model that created the microdata, and that computations were done in a secure environment; we do not consider attackers who have access to the mechanism beyond the released microdata and the results of the quality criteria evaluations. \textbf{Access to mechanism}: released dataset.

Another nuanced factor is the population-level auxiliary information. The Ministry of Health publishes all 1-way and some 2-way marginals (with lower binning resolution) for the original (non-privatized) dataset, and so any attacker would have access to this information. In addition, vetted medical researchers can access the ``anonymized'' original data in the secure enclave environment to conduct pre-approved statistical analyses; they are allowed to export only a limited amount of information from this environment.\footnote{This is a simplification; the Ministry also examines the requested information and must provide approval before allowing it to be exported.} These statistics may then eventually be released in scientific publications and become part of the public knowledge. For this reason, we consider attackers whose \textbf{population-level auxiliary knowledge} is: Ministry of Health published statistical queries on \dd, and  chosen statistical queries on \dd\footnote{In general we are not worried about researchers choosing malicious queries; rather, we allow for adversaries that \emph{could} choose the researchers' queries, as an overestimate of the adversary's power.}. 

With this in mind, we consider three potential attackers who each have very different amounts and types of dataset-level auxiliary information and different attack goals. 

\textbf{\textcolor[HTML]{228833}{Health management organization.}} Israel has four state-mandated health service organizations, each of which hold a substantial number of complete medical records for the population. For the portion of the population covered by each organization, the organization has all of the fields of the birth microdata (and more). Thus, we can model this type of knowledge as a \emph{chosen subsample of full records}. In this case, the attacker's goal may be to \emph{reconstruct} the rows of the dataset that the health organization does not have.

\textbf{\textcolor[HTML]{AA3377}{Baby and toddler retail chain.}} Private retailers like baby stores may be able to infer some of the columns of the birth data set (such as the baby's assigned sex at birth and the birth date) based on purchase history. 
Other columns of the data set may be opaque to the retailer, such as the baby's birth weight. We model this type of auxiliary knowledge as \emph{chosen partial records subsample}. The retailer's goal might be to \emph{infer attributes} of its customers (for example, perhaps the mother's parity is relevant to the family's shopping behaviors).  

\textbf{\textcolor[HTML]{4477AA}{Curious individual.}} Lastly, we consider an attacker who is a curious onlooker with no dataset-level auxiliary information. This person may want to target a specific person (such as their neighbor) or anyone generally (such as a journalist investigating privacy guarantees of the mechanism). This person may have a number of goals; we consider: singling out, attribute inference, and reconstruction.

\paragraph{\textsc{Evaluator}.}
For all three of our Attackers, we consider an Evaluator who has the same knowledge and powers as the attacker, except that the Evaluator has no access to the mechanism or its output. In other words, we are comparing to a baseline attacker in a world where the mechanism was simply never run. The success metrics for each attacker varies by task. The reconstruction attackers' success will be measured via mean squared error; attribute inference and singling out success is measured with the probability of an exact match.

\subsection{Using threat modeling to inform the choice of $\eps$}
\label{sec:informing}

Now we turn to discuss how our threat modeling could have been useful to the design of the birth dataset release for setting the privacy parameter $\eps$ of the DP system. Due to the absence of an established methodology for setting this parameter, the designers of the release based the choice of $\eps$ on three heuristics: (1) the $\eps$ allocated to evaluating the quality criteria should be the lowest possible that still providing sufficiently accurate evaluation; (2) the $\eps$ allocated to training a generative model for the microdata should be the lowest possible while producing a synthetic dataset meeting all quality criteria with a reasonable probability ($\geq10\%$); and (3) the total $\eps$ should not exceed 10, positioning it within the mid-range of privacy loss budgets utilized in previous real-world deployments \cite{desfontain2021ListRealworld}.

These heuristics capped the total $\eps$ used in the release and took a privacy-first approach (with a clear predefined notion of ``good-enough'' utility formulated through the quality criteria). However, the cap of $\eps \leq 10$ was based on standards set according to past data releases, which come from very diverse contexts, data types, and application domains; it is not clear to what extent the level of privacy protection provided in one release translates appropriately to another release. Equipped with a list of attacks generated by threat modeling, the designers of the live birth dataset release could have executed these attacks, either analytically or empirically, to contextualize $\eps = 9.98$ in terms of privacy risk, and communicated these risks quantitatively to the stakeholders.

From a broader perspective, our taxonomy can be incorporated into existing frameworks of privacy risk management. Consider the NIST SP 800-30 ``Guide for Conducting Risk Assessments'' in information technology, which is a framework extensively used by cybersecurity professionals \cite{NISTSP800-30Rev1}. NIST SP 800-30 defines risk as ``a measure of the extent to which an entity is threatened by a potential circumstance or event, typically a function of: (i) the adverse impacts that would arise if the event occurs; and (ii) the likelihood of its occurrence.'' The first component can be modeled with a taxonomy of privacy harms, such as the one \citet{Citron2021PrivacyH} proposes. The likelihood is determined by threats (called `attacks' in our terminology), which consist of threat events (e.g., a specific reconstruction attack) and threat sources (i.e., a specified adversary). Our taxonomy offers a comprehensive language for describing privacy threats. Moreover, within the NIST framework, selecting $\eps$ can be viewed as tuning a security control to mitigate privacy threats.

\section{Reconstruction Robustness for Weaker Adversaries}\label{s.reconstruction}

The taxonomy and our case study highlight clear gaps in the literature of privacy attacks needed to evaluate privacy risks in real-world contexts. For example, the ``curious individual'' attacker we defined in Section~\ref{sec:case-study} has not been studied before---specifically an attacker whose goal is to reconstruct a drawn individual target's data record with no dataset-level auxiliary information, and only blackbox access to the mechanism. More generally, the theory relating DP to attack success tends to focus on settings where the data set is constructed or ``worst case'' rather than the distributional setting where the data set is drawn; thus, it can be hard to establish baseline success rates and determine whether an empirical attack is a meaningful breach of privacy in the distributional setting. 

In this section, we formally prove upper bounds on the success of the ``curious individual'' attacker against a general $\eps$-DP mechanism, which allows us to better understand how vulnerable DP mechanisms are to this type of attacker.

Data reconstruction robustness has been studied by \cite{balle2022reconstructing}, however our setting differs from this prior work in two important ways. 
First, the dataset is drawn (not constructed), to reflect the realistic assumption that the attacker cannot influence the dataset that is recorded by the Ministry of Health. Second, the attacker has no dataset-level auxiliary information (compared to having $n-1$ rows), which reflects the limited knowledge that an average onlooker would have of the true underlying dataset. These two weakenings of the attack setting are reasonable for many other real-world scenarios.

We call protection against this attack ``Distributional Reconstruction Robustness'' (Algorithm~\ref{alg:less_informed_recon}). The full specification of this attack setting appears in Figure~\ref{fig:taxonomy} with label DistReRo. We prove that reconstruction robustness implies two variants of distributional reconstruction robustness (Theorems \ref{thm:rero_to_distrero}, \ref{thm:rero_to_bcdistrero}) and as a result, $\eps$-DP also implies distributional ReRo (Corollaries \ref{thm:DP_to_distrero}, \ref{thm:DP_to_bcdistrero}). We show via a counterexample that distributional ReRo does not necessarily imply ReRo (Theorem~\ref{thm:separation}), and thus distributional ReRo is a strictly weaker notion than ReRo. We summarize our findings in Figure~\ref{fig:triangle}. Together, these findings bring the literature one step closer in contextualizing reconstruction attacks and baselines in the distributional setting.

\subsection{Reconstruction Robustness Preliminaries}

We consider datasets $\ds = (x_1, \ldots, x_n) \in \cX^n$ consisting of one data record per user. We model a \emph{reconstruction adversary} as a function $\adv: Range(\mech) \times \cX^* \times \AUX \to \cX$ that maps an output of a mechanism $\mech$, some subset of records from the data set \ds, and auxiliary information \aux to a reconstruction space $\cX$. In the case of Algorithm~\ref{alg:informed_recon}, which captures a maximally informed adversary, \adv gets the mechanism output, data set (except the target record), a description of the private mechanism \mech, and auxiliary information, and produces a reconstruction of the target individual. This attack is also visualized in Figure~\ref{fig:taxonomy} with label Rec. Robustness. This is a relaxation of the worst-case membership inference game that DP guarantees, and in the remainder of this section we will further relax this game.

\begin{algorithm}
    \caption{Reconstruction attack with an informed adversary \cite{balle2022reconstructing}}
    \label{alg:informed_recon}
    \hspace*{\algorithmicindent} \textbf{Input:} Mechanism \mech, Adversary \adv, Dataset \ds, Target \target \\
    \hspace*{\algorithmicindent} \textbf{Output:} Loss of reconstruction
    \begin{algorithmic}[1] 
            \State Run mechanism $\theta \gets \mech(\ds \cup \{\target\})$
            \State \Return Reconstruction guess  $\hat{\target} \gets \adv(\theta, \ds)$
    \end{algorithmic}
\end{algorithm}

\citet{balle2022reconstructing} define \emph{reconstruction robustness} as a quantification over all data sets and an adversary's prior over the target individual. Reconstruction robustness limits the probability that the reconstruction is a good one, according to some general loss metric, with probability taken over the choice of target individual and the random coins of the private mechanism.

\begin{definition}[Reconstruction Robustness (ReRo) \cite{balle2022reconstructing}] 
\label{def:rero}
Let $\dd$ be a prior over \cX and $\ell: \cX \times \cX \to \R_{\geq 0}$ be a reconstruction error function. A randomized mechanism $\mech:\cX^n \to \Theta$ is $(\eta, \gamma)$-ReRo with respect to $\dd$ and $\ell$ if for any data set $\ds \in \cX^{n-1}$ and any reconstruction attack $\adv :\Theta \times \cX \to \cX$, 
\begin{equation*}
    \Pr_{z \sim \dd, \theta \sim \mech(\ds \cup \{z\})} [\ell(z, \adv(\theta, \ds)) \leq \eta] \leq \gamma.
\end{equation*}
\end{definition}

The values of $\eta$ and $\gamma$ may not be informative of the privacy of $\mech$ in isolation, as they are highly dependent on the data distribution and loss function. Thus, \citet{balle2022reconstructing} consider the \emph{baseline} level of reconstruction for a particular prior and loss function to contextualize the privacy risks of \mech. The baseline measures the best reconstruction that a naive attacker who \emph{does not have access to the mechanism output} can make. In a sense, it measures how ``easy'' it is to get a ``good'' reconstruction for a given prior over targets and a given measure of reconstruction loss \emph{when no information about the specific \ds is revealed to \adv}.

\begin{definition}[ReRo Baseline Error \cite{balle2022reconstructing}] \label{def:rero_baseline}
The baseline error with respect to a given prior $\dd$ and loss function $\ell$ is defined as,
\begin{equation*}
    \kappa_{\dd, \ell}(\eta) = \sup_{z_0 \in \cX} \Pr_{x \sim \dd} [\ell(x, z_0) \leq \eta].
\end{equation*}
This baseline represents the best reconstruction by an oblivious adversary without access to the mechanism's output.
\end{definition}

Finally, the notion of reconstruction robustness can be related to DP guarantees.

\begin{theorem}[\cite{balle2022reconstructing}, Theorem 2]
\label{thm:rero}
    Fix any prior $\dd$ over \cX, $\eta>0$, loss function $\ell:\cX \times \cX \to \R_{\geq 0}$, and corresponding reconstruction robustness baseline error $\kappa = \kappa_{\dd, \ell}(\eta)$. If $\mech$ is an $\eps$-differentially private mechanism, then $\mech$ is also $(\eta, \gamma)$-ReRo for $\gamma = \kappa \cdot e^\eps$.
\end{theorem}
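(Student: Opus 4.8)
The plan is to reduce the claimed bound to the baseline error $\kappa$ by two moves: a single application of the $\eps$-DP inequality that replaces the true target by a fixed reference target, and then exploiting the independence between the prior draw and the mechanism output that this replacement creates. This is essentially the standard conversion between DP and a Bayesian reconstruction bound.

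First, I would fix an arbitrary data set $\ds \in \cX^{n-1}$ and an arbitrary reconstruction attack $\adv$, and abbreviate the adversary's guess by $r(\theta) := \adv(\theta, \ds) \in \cX$. The ReRo probability unfolds as an integral over the prior: $\Pr_{z \sim \dd,\, \theta \sim \mech(\ds \cup \{z\})}[\ell(z, r(\theta)) \le \eta] = \E_{z \sim \dd}\big[\Pr_{\theta \sim \mech(\ds \cup \{z\})}[\theta \in S_z]\big]$, where $S_z := \{\theta \in \Theta : \ell(z, r(\theta)) \le \eta\}$ is, for each fixed $z$, a fixed measurable subset of $\Theta$. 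The point of this rewriting is that the mechanism is now applied to a data set that differs from a reference one in a single record.

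Second, I would pick any fixed reference target $z_0 \in \cX$, chosen before the draw of $z$. Since $\ds \cup \{z\}$ and $\ds \cup \{z_0\}$ differ in exactly one record, $\eps$-DP gives $\Pr_{\theta \sim \mech(\ds \cup \{z\})}[\theta \in S_z] \le e^\eps \cdot \Pr_{\theta \sim \mech(\ds \cup \{z_0\})}[\theta \in S_z]$ for every $z$. Substituting this and pulling $e^\eps$ outside the expectation bounds the ReRo probability by $e^\eps \cdot \Pr_{z \sim \dd,\, \theta \sim \mech(\ds \cup \{z_0\})}[\ell(z, r(\theta)) \le \eta]$, and crucially, in this joint distribution $z$ and $\theta$ are now independent. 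Conditioning on $\theta$, for each fixed $\theta$ the guess $r(\theta)$ is a fixed element of $\cX$, so $\Pr_{z \sim \dd}[\ell(z, r(\theta)) \le \eta] \le \sup_{z' \in \cX} \Pr_{z \sim \dd}[\ell(z, z') \le \eta] = \kappa_{\dd, \ell}(\eta) = \kappa$ by Definition~\ref{def:rero_baseline}. Taking expectation over $\theta$ yields the bound $e^\eps \kappa$, uniformly over $\ds$ and $\adv$, which is exactly $(\eta, \kappa e^\eps)$-ReRo.

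I do not expect a genuine obstacle here; the two subtle points are bookkeeping rather than mathematical. One is measurability: that $S_z$ is measurable for each $z$ and that $z \mapsto \Pr_{\theta \sim \mech(\ds \cup \{z\})}[\theta \in S_z]$ is integrable, which I would dispatch by invoking the standard measurability conventions on $\mech$, $\adv$, and $\ell$. The other is making sure the reference target $z_0$ is fixed independently of the random draw $z$, since the independence used in the third step relies on it. As a remark, the identical argument applied to $(\eps,\delta)$-DP gives $\gamma = \kappa e^\eps + \delta$, although only the pure-DP statement is needed here.
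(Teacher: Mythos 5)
Your proposal is correct, and it is essentially the same argument as the original proof in \citet{balle2022reconstructing} (which this paper cites without reproducing): apply the $\eps$-DP inequality once to swap the true target $z$ for a fixed reference $z_0$ in the mechanism's input, then use the resulting independence of $z$ and $\theta$ to bound the conditional success probability by the baseline $\kappa_{\dd,\ell}(\eta)$, uniformly over $\ds$ and $\adv$. Your closing remark about the $(\eps,\delta)$ extension giving $\gamma = \kappa e^\eps + \delta$ is also consistent with the known result.
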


\subsection{Distributional Reconstruction Robustness: Definition and Bounds}

We now define a relaxation of the reconstruction robustness game, which we call \emph{distributional reconstruction robustness} (DistReRo). In this game, the records of the dataset are drawn i.i.d.~from some known data distribution \dd. The reconstruction adversary only receives the mechanism output, the description of the mechanism, and a description of \dd. This setting may be more realistic for applications where an adversary simply does not have access to all-but-one data record. For example, in the Census reconstruction study of \cite{Dick2022ConfidenceRankedRO}, the attacker used published outputs from the Census Bureau to produce reconstructions of records; in this setting, the true data is kept internal to the Census, so the general public only has access to distributional knowledge about the true records (e.g., via historical census publications), in addition to the published outputs. 

Algorithm \ref{alg:less_informed_recon} formalizes this attack. The key difference from Algorithm \ref{alg:informed_recon}, is that while the dataset $\ds$ sampled from $\dd^n$ is used to generate the output of mechanism \mech, the adversary's reconstruction only uses information about \dd, not about $\ds$.

\begin{algorithm}
    \caption{Distributional Reconstruction Robustness attack}
    \label{alg:less_informed_recon}
    \hspace*{\algorithmicindent} \textbf{Input:} Mechanism \mech, Adversary \adv, Data distribution \dd, Data set size $n$\\
\hspace*{\algorithmicindent} \textbf{Output:} Loss of reconstruction
    \begin{algorithmic}[1] 
            \State Sample dataset $\ds \sim \dd^n$
            \State Run mechanism $\theta \gets \mech(\ds)$
            \State \Return Reconstruction guess $\hat{\target} \gets \adv(\theta, \dd)$
    \end{algorithmic}
\end{algorithm}

Unlike attack settings where the attacker has $n-1$ rows of the data set \ds, in this setting there is no defined target in the semantics of the attack game.\footnote{Giving the attacker a particular index $i \in [n]$ to target also does not solve this problem, as \mech could apply a random permutation to the records before doing a computation with no effect on the real reconstruction risk of that mechanism.} Thus, there are numerous ways one could measure the distance from the data set \ds to the produced reconstruction $\hat{\target}$.  One natural way to measure this distance is to measure $\ell(x, \hat{\target})$ where $x$ is the point in \ds that minimizes this loss. Another natural mapping is to look at the loss for an average point in \ds (i.e., one drawn uniformly from \ds). These two measures correspond to the reconstruction risk posed to the most vulnerable point and an average point, respectively.

These two DistReRo variants are defined and analyzed respectively in Sections \ref{s.bcrero} and \ref{s.averero}. For each of these definitions, we prove parametric relationships to DP and to ReRo. These results are summarized in Figure \ref{fig:triangle}, and presented in the remainder of this section.

\begin{figure}[hbtp]
\centering
\includegraphics[width=0.9\textwidth]{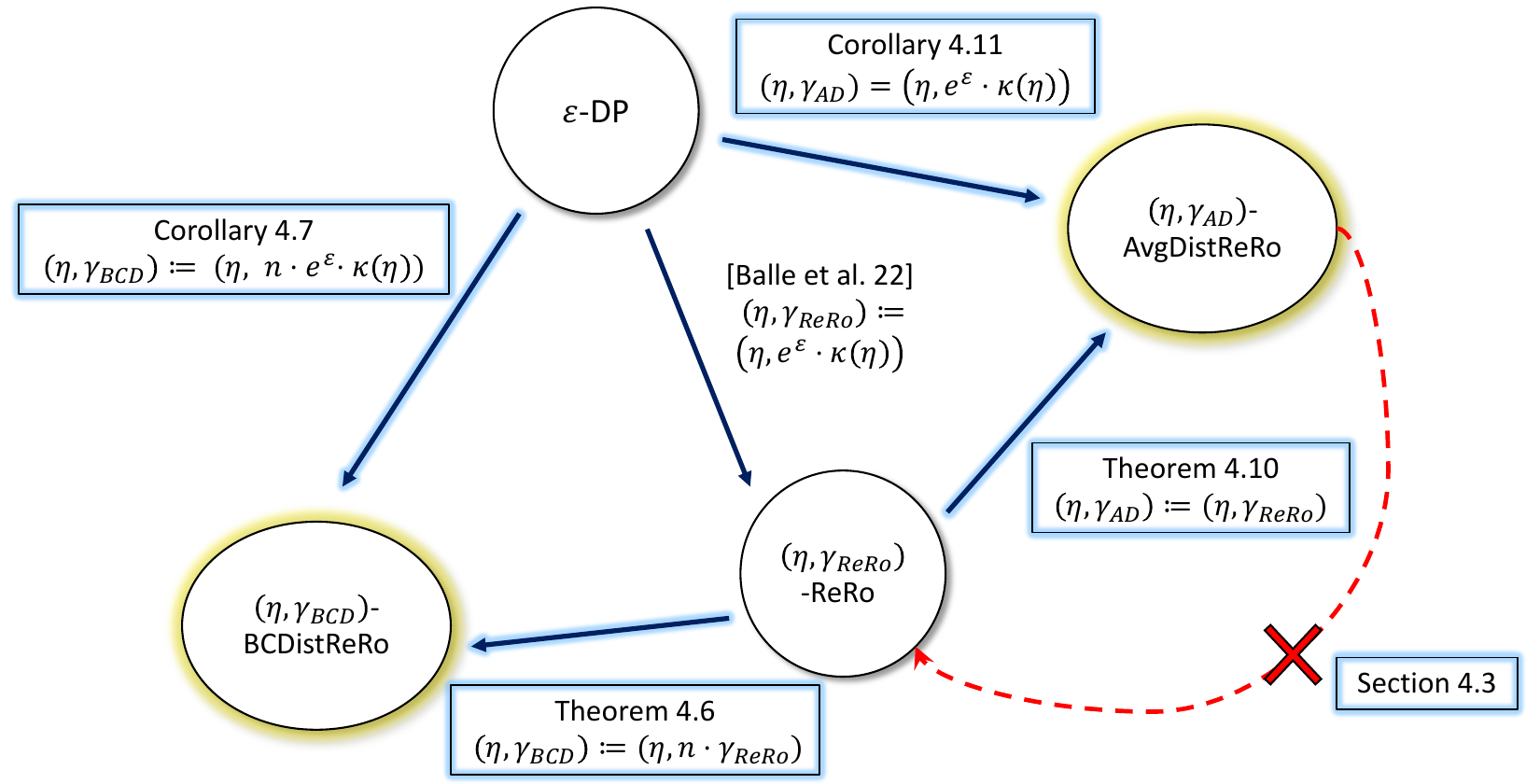}
\caption{Summary of the relationships among different notions of reconstruction robustness for fixed reconstruction loss $\eta$ throughout. Our new definitions are highlighted in yellow, and our new implications among definitions are highlighted in blue. The red dashed arrow indicates that the implication does not hold.}
\label{fig:triangle}
\end{figure}

\subsubsection{Bounding Reconstruction for the Easiest Target}\label{s.bcrero}

Now we show how to utilize our taxonomy to explore a new attack setting using the same distributional reconstruction robustness game in Algorithm~\ref{alg:less_informed_recon}. 
We consider bounding the reconstruction success metric for the ``most vulnerable'' target, meaning the person who is easiest to reconstruct with respect to a given loss function $\ell$. Bounding the reconstruction success with respect to this individual naturally bounds the reconstruction success for those who are harder to reconstruct.

\begin{definition}[Best Case Distributional Reconstruction Robustness, BCDistReRo]\label{def:BCRR} 
Let $\dd$ be a prior over \cX and $\ell: \cX \times \cX \to \R_{\geq 0}$ be a reconstruction error function. A randomized mechanism $\mech:\cX^n \to \Theta$ is $(\eta, \gamma)$-BCDistReRo with respect to $\dd$ and $\ell$ if for any reconstruction attack $\adv :\Theta \to \cX$, 
\begin{equation*}
    \Pr_{\substack{\ds \sim \dd^n \\ \theta \sim \mech(\ds)}} \left[\inf_{x \in \ds}\ell(x, \adv(\theta)) \leq \eta\right] \leq \gamma.
\end{equation*}
\end{definition}

Note that for a fixed loss function $\ell$, parameters $\eta, \gamma$ are not necessarily informative on their own. This is because some data distributions $\dd$ are necessarily easier to reconstruct than others. Thus, a large $\gamma$ may indicate that \mech is disclosive, or that \dd puts a lot of weight on few data values.\footnote{This observation was a major criticism of the census reconstruction attacks.} In order to distinguish these two scenarios, one must compare to an appropriate baseline.

We define the adversary's baseline level of success as the best fixed strategy an adversary could use without access to the mechanism output.

\begin{definition}[BCDistReRo Baseline Success]
The baseline error with respect to a given data distribution $\dd$ and loss function $\ell$ is defined as
\begin{equation*}
    \overline{\kappa}_{\dd, \ell}(\eta) = \sup_{z_0 \in \cX} \Pr_{\ds \sim \dd^n} [ \inf_{x\in\ds} \ell(x, z_0) \leq \eta].
\end{equation*}
This represents the best strategy of an oblivious adversary without access to the mechanism's output.
\end{definition}

\paragraph{The importance of baselines} For data distributions with a high baseline level of reconstruction, even a DP mechanism may appear vulnerable to reconstruction attacks in this setting.\footnote{This is not unique to our definitions; \citet{balle2022reconstructing} also compare to the baseline reconstruction success.} Concretely, if any elements of the domain of \dd have weight $1/n$, then $\gamma \geq 1 - (1-1/n)^n$ for the zero-one (exact match) loss function, simply because the attacker gets $n$ ``tries'' to match one of the records in \ds. This holds even if \mech produces no output. Thus to demonstrate the protections (or lack of protections) of a mechanism, $\gamma$ must be contextualized against a baseline, and also computed for distributions without too much mass on any one data record---otherwise, one may think that a significant privacy breach has occurred when actually the baseline success rate was high. Furthermore, in Corollary~\ref{thm:DP_to_bcdistrero} we will prove that for an $\eps$-DP mechanism, $\gamma \leq n \cdot e^\eps \cdot \kappa_{\dd, \ell}(\eta)$ for data sets of size $n$. 
The current bounds become vacuous ($\gamma > 1$) if the baseline success rate $\kappa$ is above $1/n e^\eps$; thus, demonstrating the reconstruction robustness of a DP mechanism requires data distributions with low baseline success rates in this setting.

Next we relate this version of distributional reconstruction robustness to the original reconstruction robustness definition. Theorem \ref{thm:rero_to_bcdistrero} shows that ReRo implies BCDistReRo with an additional factor of $n$ in the $\gamma$ parameter. Note that this factor of $n$ blowup in the success probability of the BC distributional attacker is unavoidable, since the infimum over the data set essentially allows the attacker $n$ `tries' to match a record in the dataset.

\begin{theorem}\label{thm:rero_to_bcdistrero}
    Fix any data distribution $\dd$, any $\eta$, and loss function $\ell$. If mechanism \mech is $(\eta, \gamma)$-ReRo, then \mech is also $(\eta, n \gamma)$-BCDistReRo.
\end{theorem}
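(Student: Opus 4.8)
The plan is to reduce the BCDistReRo guarantee to the ReRo guarantee via a union bound over the $n$ coordinates of the drawn dataset. The key observation is that if $\ds = (x_1, \ldots, x_n) \sim \dd^n$, then by a union bound,
\[
\Pr\left[\inf_{x \in \ds}\ell(x, \adv(\theta)) \leq \eta\right] = \Pr\left[\exists i \in [n] : \ell(x_i, \adv(\theta)) \leq \eta\right] \leq \sum_{i=1}^n \Pr\left[\ell(x_i, \adv(\theta)) \leq \eta\right],
\]
so it suffices to show that each summand is at most $\gamma$. Fixing an index $i$, I would condition on the values $x_{-i} = (x_j)_{j \neq i}$ of all the other records; since the records are i.i.d., conditionally on $x_{-i}$ the target $x_i$ is still distributed according to $\dd$, independently of $x_{-i}$. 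This matches the ReRo game exactly: the dataset-minus-target is the fixed string $x_{-i} \in \cX^{n-1}$, the target is drawn from $\dd$, and the mechanism runs on $x_{-i} \cup \{x_i\}$.

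The one technical wrinkle — and the step I expect to require the most care — is the adversary mismatch: in BCDistReRo the attacker is a map $\adv : \Theta \to \cX$ (it sees only $\theta$), whereas ReRo quantifies over attackers $\adv' : \Theta \times \cX^{n-1} \to \cX$ that also take the rest of the dataset as input. This direction is the easy one, though: any $\Theta$-only attacker $\adv$ induces, for each fixed $x_{-i}$, the ReRo attacker $\adv'(\theta, \cdot) := \adv(\theta)$ that simply ignores its second argument. Applying the $(\eta,\gamma)$-ReRo guarantee of $\mech$ with data string $x_{-i}$ and this induced attacker gives
\[
\Pr_{x_i \sim \dd,\ \theta \sim \mech(x_{-i} \cup \{x_i\})}\left[\ell(x_i, \adv(\theta)) \leq \eta\right] \leq \gamma
\]
for every fixed $x_{-i}$. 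One must also check that $\mech(x_{-i} \cup \{x_i\})$ — where $\{\cdot\}$ denotes multiset union restoring a size-$n$ dataset — has the same distribution as $\mech(\ds)$ conditioned on the $i$-th coordinate being $x_i$ and the rest being $x_{-i}$; this is immediate from the definition since $\ds$ is exactly that multiset. Taking expectation over $x_{-i} \sim \dd^{n-1}$ preserves the bound $\gamma$, and plugging back into the union bound yields $\gamma$-blowup to $n\gamma$, i.e. $\mech$ is $(\eta, n\gamma)$-BCDistReRo.

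A minor point worth flagging in the write-up: the ReRo definition as stated treats the dataset as an element of $\cX^{n-1}$ (an ordered tuple), so one should note that reindexing $x_{-i}$ as a length-$(n-1)$ string and then reinserting $x_i$ to recover $\ds$ is consistent with how $\mech$ acts; if $\mech$ is order-sensitive this is fine because we are conditioning on the full ordered $\ds$, and if one prefers the multiset view it is handled by the footnote already in the paper about permutation-invariance. No part of the argument needs the structure of $\ell$ beyond nonnegativity, and nothing is lost in the reduction except the factor $n$, which the theorem statement already acknowledges is tight.
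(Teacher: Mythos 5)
Your proposal is correct and follows essentially the same route as the paper's proof: a union bound over the $n$ coordinates, followed by conditioning on the remaining $n-1$ i.i.d.\ records so that each term becomes an instance of the ReRo game with a fixed dataset $x_{-i}$ and target drawn from $\dd$, yielding the factor-$n$ blowup. Your explicit handling of the adversary-signature mismatch (a $\Theta$-only attacker inducing a ReRo attacker that ignores its dataset argument) is a detail the paper leaves implicit, but it does not change the argument.
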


\begin{proof}
    Fix any data distribution $\dd$, any $\eta$, and loss function $\ell$. Fix a mechanism \mech that is $(\eta, \gamma)$-ReRo. Then, 
    \begin{align*}
        \Pr_{\substack{\ds \sim \dd^n \\ \theta \sim \mech(\ds)}} [\inf_{x_i\in \ds} \ell(x_i, \adv(\theta)) \leq \eta]
        & = \Pr_{\substack{\ds \sim \dd^n \\ \theta \sim \mech(\ds)}} [\lor_{i\in[n]} \ell(x_i, \adv(\theta)) \leq \eta]\\
        & \leq \sum_{i\in[n]} \Pr_{\substack{\ds \sim \dd^n \\ \theta \sim \mech(\ds)}} [\ell(x_i, \adv(\theta)) \leq \eta] \quad \text{(by a union bound)} \\
        & = n \cdot \E_{X_{-1}} \left[\Pr_{\substack{x_1 \sim \dd \\ \theta \sim \mech(\{x_1\} \cup \ds_{-1})}} [\ell(x_1, \adv(\theta)) \leq \eta | X_{-1} = \ds_{-1}]\right] \quad \text{(since $\ds$ are i.i.d.)}\\
        & \leq n \cdot \gamma \quad \text{(since $\mech$ is $(\eta, \gamma)$-ReRo)}
    \end{align*}
\end{proof}

We can combine Theorem \ref{thm:rero_to_bcdistrero} with results from \cite{balle2022reconstructing} to upper bound the distributional reconstruction robustness for DP mechanisms. That is, since \citet{balle2022reconstructing} show that DP implies ReRo and Theorem \ref{thm:rero_to_bcdistrero} shows that ReRo implies BCDistReRo (with some loss in parameters), then DP must also imply BCDistReRo with the appropriate parameters.

\begin{corollary}\label{thm:DP_to_bcdistrero}
    Fix any data distribution $\dd$, any $\eta$, loss function $\ell$, and the corresponding ReRo baseline error $\kappa = \kappa_{\dd, \ell}(\eta)$. If mechanism \mech is $\eps$-differentially private, then \mech is also $(\eta, \gamma)$-BCDistReRo for $\gamma = n \cdot e^\eps \cdot \kappa$, where $n$ is the data set size.
\end{corollary}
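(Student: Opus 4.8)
The plan is to chain together two results already available in the excerpt: Theorem~\ref{thm:rero} (which states that an $\eps$-DP mechanism is $(\eta, \kappa e^\eps)$-ReRo, where $\kappa = \kappa_{\dd,\ell}(\eta)$ is the ReRo baseline error) and Theorem~\ref{thm:rero_to_bcdistrero} (which states that $(\eta,\gamma)$-ReRo implies $(\eta, n\gamma)$-BCDistReRo). Composing them directly gives the claimed bound, so this corollary should be essentially a one-line deduction.

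Concretely, first I would fix the data distribution $\dd$, the value $\eta$, the loss function $\ell$, and set $\kappa = \kappa_{\dd,\ell}(\eta)$ as in the hypothesis. Since $\mech$ is $\eps$-DP, apply Theorem~\ref{thm:rero} to conclude that $\mech$ is $(\eta, \gamma')$-ReRo with $\gamma' = \kappa \cdot e^\eps$. Then apply Theorem~\ref{thm:rero_to_bcdistrero} to this conclusion: since $\mech$ is $(\eta, \gamma')$-ReRo, it is $(\eta, n\gamma')$-BCDistReRo. Substituting $\gamma' = \kappa e^\eps$ yields that $\mech$ is $(\eta, \gamma)$-BCDistReRo for $\gamma = n \cdot e^\eps \cdot \kappa$, which is exactly the statement.

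One small point worth checking is that the adversary classes line up correctly across the two theorems: Theorem~\ref{thm:rero} and Definition~\ref{def:rero} quantify over attacks $\adv: \Theta \times \cX \to \cX$, whereas the BCDistReRo attack (Definition~\ref{def:BCRR}) uses $\adv: \Theta \to \cX$. The proof of Theorem~\ref{thm:rero_to_bcdistrero} already handles this by noting that a BCDistReRo adversary $\adv(\theta)$ is in particular a ReRo adversary that simply ignores its second argument, so no additional work is needed — the composition goes through verbatim. I do not anticipate any genuine obstacle here; the only thing to be careful about is stating the baseline error $\kappa$ is the \emph{ReRo} baseline (Definition~\ref{def:rero_baseline}) and not the BCDistReRo baseline, since it is the ReRo baseline that appears in Theorem~\ref{thm:rero}. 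The discussion preceding the corollary in the excerpt already emphasizes this, and the resulting bound $\gamma \le n e^\eps \kappa_{\dd,\ell}(\eta)$ becomes vacuous once $\kappa_{\dd,\ell}(\eta) > 1/(n e^\eps)$, consistent with the remarks on baselines made earlier.
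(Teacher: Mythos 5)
Your proposal is correct and matches the paper's proof exactly: both compose Theorem~\ref{thm:rero} (DP implies $(\eta, e^\eps\kappa)$-ReRo) with Theorem~\ref{thm:rero_to_bcdistrero} (ReRo implies BCDistReRo with a factor-$n$ loss) to obtain $\gamma = n\cdot e^\eps\cdot\kappa$. Your additional remark about the adversary classes lining up is a sensible sanity check that the paper leaves implicit.
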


\begin{proof}
    Fix any data distribution $\dd$, any $\eta$, loss function $\ell$, and the corresponding baseline error $\kappa = \kappa_{\dd, \ell}(\eta)$. Fix a mechanism \mech that is $\eps$-DP, and fix a data set size $n$. Then, by
    \cite{balle2022reconstructing}, $\mech$ is also $(\eta, e^\eps \cdot \kappa)$-ReRo. Thus, by Theorem~\ref{thm:rero_to_bcdistrero}, $\mech$ is $(\eta, n\cdot e^\eps \cdot \kappa)$-BCDistReRo.
\end{proof}

\subsubsection{Reconstructing an Average Target}\label{s.averero}

We now consider a different way of mapping the reconstruction attempt $\hat{\target}$ to the true data set \ds. Rather than taking a pessimistic view and measuring the \emph{smallest reconstruction loss} to a data point in \ds, we measure the loss to an average point in \ds. \emph{Average distributional reconstruction robustness} (AveDistReRo) considers protection against an adversary wishing to reconstruct a target sampled uniformly at random from the dataset. That is, AveDistReRo measures the protection against distributional reconstruction attacks, averaged across the sampled dataset.

\begin{definition}[Average Distributional Reconstruction Robustness] \label{def:avgDRR}
Let $\dd$ be a data distribution over \cX and $\ell: \cX^n \times \cX \to \R_{\geq 0}$ a reconstruction error function. A randomized mechanism $\mech:\cX^n \to \Theta$ is $(\eta, \gamma)$-AvgDistReRo with respect to $\dd$ and $\ell$ if for any reconstruction attack $\adv :\Theta \to \cX$, 
\begin{equation*}
    \Pr_{\substack{\ds \sim \dd^n \\ \theta \sim \mech(\ds)\\ i \sim \unif[n]}}[\ell(x_i, \adv(\theta)) \leq \eta] \leq \gamma.
\end{equation*}
\end{definition}

Again, to properly contextualize the reconstruction success probability, we define the adversary's baseline level of success as the best fixed strategy an adversary could use without access to the mechanism output.

\begin{definition}[AvgDistReRo Baseline Error]
The baseline error with respect to a given data distribution $\dd$ and loss function $\ell$ is defined as,
\begin{equation*}
    \sup_{z_0 \in \cX} \Pr_{\substack{\ds \sim \dd^n \\ i\sim \unif[n]}} [ \ell(x_i, z_0) \leq \eta].
\end{equation*}
Because $\ds$ is sampled i.i.d., this is identical to the ReRo baseline (Definition~\ref{def:rero_baseline}):
\begin{equation*}
    \kappa_{\dd, \ell}(\eta) := \sup_{z_0 \in \cX} \Pr_{x \sim \dd} [\ell(x, z_0) \leq \eta].
\end{equation*}
Thus, we refer to both as $\kappa_{\dd, \ell}(\eta)$.
\end{definition}

Next we relate the reconstruction success of AvgDistReRo with ReRo. Theorem \ref{thm:rero_to_distrero} shows that ReRo implies AveDistReRo without any loss in parameters.

\begin{theorem}\label{thm:rero_to_distrero}
    Fix any data distribution $\dd$, any $\eta$, and loss function $\ell$. If mechanism \mech is $(\eta, \gamma)$-ReRo, then \mech is also $(\eta, \gamma)$-AvgDistReRo.
\end{theorem}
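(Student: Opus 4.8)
The plan is to follow the same template as the proof of Theorem~\ref{thm:rero_to_bcdistrero}, but without the union bound, since now we are averaging over a uniformly random index $i \sim \unif[n]$ rather than taking an infimum (``OR'') over all indices. Concretely, fix a data distribution $\dd$, a value $\eta$, a loss function $\ell$, and a mechanism $\mech$ that is $(\eta,\gamma)$-ReRo. Fix an arbitrary reconstruction attack $\adv : \Theta \to \cX$ for the AvgDistReRo game; note this same $\adv$ can be viewed as a degenerate ReRo adversary that ignores its $\cX^{n-1}$ input.

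The key manipulation is to expand the probability over the uniform index and then use the i.i.d.\ structure of $\ds \sim \dd^n$. Writing $\ds = (x_1,\dots,x_n)$, linearity of expectation gives
\begin{align*}
\Pr_{\substack{\ds \sim \dd^n \\ \theta \sim \mech(\ds) \\ i \sim \unif[n]}}\!\left[\ell(x_i, \adv(\theta)) \leq \eta\right]
&= \frac{1}{n}\sum_{i\in[n]} \Pr_{\substack{\ds \sim \dd^n \\ \theta \sim \mech(\ds)}}\!\left[\ell(x_i, \adv(\theta)) \leq \eta\right].
\end{align*}
For each fixed $i$, I would condition on the other $n-1$ records $\ds_{-i}$ and observe that, since the records are i.i.d., the pair $(x_i, \theta)$ has exactly the distribution appearing in the ReRo definition applied to the $(n-1)$-record dataset $\ds_{-i}$ with fresh target $x_i \sim \dd$ and $\theta \sim \mech(\ds_{-i}\cup\{x_i\})$. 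Hence each summand is $\E_{\ds_{-i}}\!\big[\Pr_{x_i\sim\dd,\ \theta\sim\mech(\ds_{-i}\cup\{x_i\})}[\ell(x_i,\adv(\theta))\leq\eta \mid \ds_{-i}]\big] \leq \gamma$ by the ReRo guarantee (which holds for every $(n-1)$-record dataset). Averaging $n$ terms each at most $\gamma$ yields the bound $\gamma$, completing the proof.

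There is essentially no obstacle here beyond bookkeeping: the work is entirely in making the conditioning argument precise so that the ReRo definition can be invoked verbatim, and in noting that an AvgDistReRo adversary $\adv:\Theta\to\cX$ is a special case of a ReRo adversary $\adv:\Theta\times\cX^{n-1}\to\cX$. The one subtlety worth stating explicitly is why there is no factor-$n$ loss as in Theorem~\ref{thm:rero_to_bcdistrero}: here the $\frac{1}{n}$ from the uniform average exactly cancels the $n$ terms in the sum, whereas the best-case variant replaces the average by a maximum (union bound) and therefore keeps the factor $n$. I would include a one-line remark to that effect for contrast with the previous theorem.
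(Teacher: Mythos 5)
Your proposal is correct and matches the paper's proof essentially step for step: both expand the probability over the uniform index as an average of $n$ terms, condition each term on $\ds_{-i}$, and invoke the ReRo guarantee (which holds for every fixed $(n-1)$-record dataset) to bound each conditional probability by $\gamma$. The extra remarks you add—that the AvgDistReRo adversary is a degenerate ReRo adversary and that the $\tfrac{1}{n}$ average cancels the $n$ terms, unlike the union bound in the best-case variant—are accurate and left implicit in the paper.
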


\begin{proof}
    Fix any data distribution $\dd$, any $\eta$, and loss function $\ell$. Fix a mechanism \mech that is $(\eta, \gamma)$-ReRo. Then, 
    \begin{align*}
        \Pr_{\substack{\ds \sim \dd^n \\ \theta \sim \mech(\ds)\\ i \sim \unif[n]}} [\ell(x_i, \adv(\theta)) \leq \eta]
        & = \frac{1}{n} \sum_{i\in[n]} \E_{\ds_{-i}} \left[\Pr_{\ds_i, \theta}[\ell(\ds_i, \adv(\theta)) \leq \eta \mid \ds_{-i} = x_{-i}] \right]\\
        & \leq \frac{1}{n} \sum_{i\in[n]} \E_{\ds_{-i}} \left[\gamma \right] \quad \text{(since $\mech$ is $(\eta, \gamma)$-ReRo)}\\
        & = \gamma.
    \end{align*}
\end{proof}

Similar to Corollary \ref{thm:DP_to_bcdistrero}, we can combine Theorem \ref{thm:rero_to_distrero} with the result of \citet{balle2022reconstructing} to bound the AveDistReRo attack success against a DP mechanism.

\begin{corollary}\label{thm:DP_to_distrero}
    Fix any data distribution $\dd$, any $\eta$, loss function $\ell$, and the corresponding baseline error $\kappa = \kappa_{\dd, \ell}(\eta)$. If mechanism \mech is $\eps$-differentially private, then \mech is also $(\eta, \gamma)$-AveDistReRo for $\gamma = e^\eps \cdot \kappa$.
\end{corollary}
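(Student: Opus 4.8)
The plan is to combine the two earlier results in sequence, exactly as was done for the best-case variant in Corollary~\ref{thm:DP_to_bcdistrero}. The statement is a corollary, so no new technique is needed: the work is entirely in chaining Theorem~\ref{thm:rero} and Theorem~\ref{thm:rero_to_distrero} with the right bookkeeping of parameters.

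First I would fix the data distribution $\dd$, the reconstruction threshold $\eta$, and the loss function $\ell$, and set $\kappa = \kappa_{\dd,\ell}(\eta)$ to be the associated ReRo baseline error (Definition~\ref{def:rero_baseline}); recall that by the remark following the AvgDistReRo baseline definition, this is also the AvgDistReRo baseline, so there is nothing to reconcile between the two baselines. Assume $\mech$ is $\eps$-DP. Step one: apply Theorem~\ref{thm:rero} (Balle et al.) with this prior, $\eta$, and $\ell$ to conclude that $\mech$ is $(\eta, \kappa\cdot e^{\eps})$-ReRo. Step two: feed this into Theorem~\ref{thm:rero_to_distrero}, which says ReRo implies AvgDistReRo with no loss in parameters, to conclude that $\mech$ is $(\eta, \kappa\cdot e^{\eps})$-AvgDistReRo. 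Since $\gamma = e^{\eps}\cdot\kappa$, this is exactly the claimed bound, and the proof is complete in two lines.

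There is essentially no obstacle here — the only thing to be careful about is that Theorem~\ref{thm:rero_to_distrero} genuinely loses nothing in $\gamma$ (unlike the best-case version, which loses the factor $n$), so the final bound is $e^{\eps}\kappa$ rather than $n\,e^{\eps}\kappa$; one should make sure the statement of the corollary reflects that, which it does. It is also worth noting in passing (though not strictly required for the proof) why no $n$ appears: averaging over the target index $i\sim\unif[n]$ commutes with the union/expectation manipulation so that each term contributes $\gamma$ and the $\tfrac1n$ prefactor cancels the sum of $n$ identical terms, whereas the infimum in BCDistReRo forces a union bound that does not. If one wanted a self-contained argument rather than a citation chain, I would unfold the proof of Theorem~\ref{thm:rero_to_distrero}: write the AvgDistReRo probability as $\tfrac1n\sum_{i\in[n]}\E_{\ds_{-i}}[\Pr_{\ds_i,\theta}[\ell(\ds_i,\adv(\theta))\le\eta\mid \ds_{-i}]]$, bound the inner conditional probability by $e^{\eps}\kappa$ via Theorem~\ref{thm:rero} applied with the fixed dataset $\ds_{-i}$, and collapse the average. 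But given that both ingredients are already stated in the excerpt, the clean approach is simply to cite them in order, mirroring the proof of Corollary~\ref{thm:DP_to_bcdistrero}.
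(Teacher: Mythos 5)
Your proposal is exactly the paper's argument: apply Theorem~\ref{thm:rero} to get that $\mech$ is $(\eta, e^{\eps}\kappa)$-ReRo, then invoke Theorem~\ref{thm:rero_to_distrero} to transfer this to AvgDistReRo with no loss in parameters. The additional remarks on why no factor of $n$ appears are correct but not needed; the two-step citation chain is the complete proof.
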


\begin{proof}
    Fix any data distribution $\dd$, any $\eta$, loss function $\ell$, and the corresponding baseline error $\kappa = \kappa_{\dd, \ell}(\eta)$. Fix a mechanism \mech that is $\eps$-DP. Then, by
    Theorem 2 in \cite{balle2022reconstructing}, $\mech$ is also $(\eta, e^\eps \cdot \kappa)$-ReRo. Thus, by Theorem~\ref{thm:rero_to_distrero}, $\mech$ is also $(\eta, e^\eps \cdot \kappa)$-AveDistReRo.
\end{proof}

\subsection{Separations}

In the previous section, we have seen that Reconstruction Robustness (ReRo) implies Average Distribution Reconstruction Robustness (AvgDistReRo) with no loss in parameters. However, does the latter imply the former, suggesting equivalence of the two notions? The following mechanism demonstrates that the average-case AvgDistReRo condition is insufficient to satisfy the worst-case ReRo condition. Note that a separating mechanism cannot be DP because, otherwise, it would imply both ReRo and AvgDistReRo with the same parameters, as seen in Theorem~\ref{thm:rero} and Corollary \ref{thm:DP_to_distrero}.

\begin{theorem}\label{thm:separation}
    There exists a mechanism $\mech$, data distribution \dd over $\{0,1\}^{n \times k}$, and a loss function $\ell$ such that $\mech$ is $(0, \Theta(\frac{1}{2^k}))$-AvgDistReRo but \mech is not $(0, \gamma)$-ReRo for $\gamma <1$.
\end{theorem}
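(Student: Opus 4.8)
The goal is to construct a mechanism $\mech$, a data distribution $\dd$ over $\{0,1\}^{n\times k}$, and a loss $\ell$ so that $\mech$ satisfies strong average-case distributional reconstruction robustness but fails worst-case ReRo. The natural strategy is to make $\dd$ a product of uniform bits, so that a ``typical'' dataset $\ds \sim \dd^n$ contains $n$ essentially independent random strings in $\{0,1\}^k$. With the exact-match (zero-one) loss $\ell$ at $\eta = 0$, any adversary whose output depends only on $\theta$ must match one of these $n$ uniformly random, independent strings; if $\mech$ leaks nothing about the data, the match probability for a single uniform record is $2^{-k}$, and averaging over $i \sim \unif[n]$ keeps it at $\Theta(2^{-k})$. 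So the plan is: let $\dd$ be uniform on $\{0,1\}^k$ per row, let $\ell$ be exact-match, and design $\mech$ to be ``mostly uninformative'' so that AvgDistReRo holds with $\gamma = \Theta(2^{-k})$, while engineering a rare event that destroys ReRo.

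The mechanism should behave as follows: on input $\ds \in (\{0,1\}^k)^n$, with overwhelming probability output $\perp$ (nothing useful), but with some small probability — or on a specially structured $\ds$ — reveal one of the records exactly. For the ReRo violation we need: there exists a fixed $\ds_{-1} \in \cX^{n-1}$ and an adversary $\adv(\theta, \ds_{-1})$ such that $\Pr_{z\sim\dd,\theta\sim\mech(\ds_{-1}\cup\{z\})}[\adv(\theta,\ds_{-1}) = z] = 1$ (or at least $\geq$ some constant, to beat $\gamma < 1$). The cleanest construction: let $\mech$ output the value of $x_1$ (the first record) always. Then ReRo fails completely — the informed adversary, knowing $\ds_{-1}$, reads off $z = x_1$ from $\theta$ with probability $1$, so $\mech$ is not $(0,\gamma)$-ReRo for any $\gamma < 1$. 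But this $\mech$ also badly fails AvgDistReRo, since the uninformed adversary outputs $\theta = x_1$ and matches $x_i$ when $i = 1$, i.e. with probability $1/n \gg 2^{-k}$ (for $k \gg \log n$). So a bare ``reveal $x_1$'' is too strong. The fix is to make the revelation happen only on a measure-$2^{-k}$-ish event: e.g. $\mech(\ds)$ outputs $x_1$ if $x_1 = 0^k$ (a ``trapdoor'' value), and $\perp$ otherwise. Then for the ReRo adversary we need $\ds_{-1}$ to force the event; but the target $z = x_1$ is drawn, not chosen, so $x_1 = 0^k$ only with probability $2^{-k}$ — that kills the ReRo violation too, since the informed adversary can only win with probability $\approx 2^{-k}$.

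The resolution is to exploit the asymmetry between ``$\ds$ drawn'' (AvgDistReRo) and ``$\ds_{-1}$ arbitrary, $z$ drawn'' (ReRo): make $\mech$ reveal $z$ whenever \emph{some other record} in the dataset equals a trapdoor value. Concretely, let $\mech(x_1,\dots,x_n)$ output $x_1$ if $x_2 = 0^k$, and $\perp$ otherwise. For ReRo: the adversary gets to fix $\ds_{-1} = (x_2,\dots,x_n)$ with $x_2 = 0^k$; then on input $\ds_{-1}\cup\{z\}$ (placing $z$ in coordinate $1$), $\mech$ always outputs $z$, and $\adv(\theta,\ds_{-1})=\theta$ recovers $z$ with probability $1$ — so $\mech$ is not $(0,\gamma)$-ReRo for $\gamma<1$. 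For AvgDistReRo: when $\ds\sim\dd^n$, the event $x_2 = 0^k$ has probability $2^{-k}$, and conditioned on it the adversary learns only $x_1$ (one specific record, which is independent uniform), while conditioned on its complement $\theta = \perp$ and the best match probability for a uniform $x_i$ is $2^{-k}$. A short case analysis then bounds the overall success: the adversary's output $\adv(\theta)$, on the good event, can be set to $x_1$ and matches $x_i$ iff $i=1$, contributing $\tfrac{1}{n}$ conditional on a $2^{-k}$ event, i.e. $\tfrac{1}{n}2^{-k}$; on the bad event the contribution is at most $2^{-k}$. Summing gives $\gamma = \Theta(2^{-k})$. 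I would need to double-check whether $\adv$ could do better by using $x_1$ to also inform its guess about other coordinates — but since the $x_i$ are mutually independent under $\dd^n$, knowing $x_1$ gives no advantage on $x_i$ for $i\neq 1$, so this is fine.

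The main obstacle is getting the AvgDistReRo bound to be genuinely $\Theta(2^{-k})$ rather than, say, $\Theta(2^{-k} + 1/n)$ or something that doesn't cleanly separate — one must verify that leaking a single whole record $x_1$ on a rare event doesn't inflate the \emph{average} success beyond $\Theta(2^{-k})$, which works precisely because the leak is confined to a $2^{-k}$-probability event and the records are i.i.d. (so the leaked record tells the adversary nothing about the $n-1$ others). A secondary point to handle carefully: the loss function signature in Definition~\ref{def:avgDRR} is $\ell:\cX^n\times\cX\to\R_{\geq 0}$ whereas ReRo uses $\ell:\cX\times\cX\to\R_{\geq 0}$; I would take $\ell$ to be exact-match on the relevant record (ignoring the rest of $\ds$), so both definitions are instantiated consistently. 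Finally I would state the parameters: $\mech$ is $(0,O(2^{-k}))$-AvgDistReRo (and one can also note $\Omega(2^{-k})$ from the baseline, giving $\Theta$) but not $(0,\gamma)$-ReRo for any $\gamma<1$, completing the separation; the remark that a separating $\mech$ cannot be DP is already justified in the text by Theorem~\ref{thm:rero} and Corollary~\ref{thm:DP_to_distrero}.
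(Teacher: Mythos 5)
Your construction is essentially the paper's: the paper also takes $\dd$ uniform on $\{0,1\}^k$, exact-match loss with $\eta=0$, and a ``trapdoor'' mechanism that outputs $\bot$ except on a rare dataset configuration, where it reveals a record verbatim; the ReRo violation comes from an adversary who fixes $\ds_{-1}$ to trigger the trapdoor, and the AvgDistReRo bound comes from the trigger having tiny probability under $\dd^n$ together with the i.i.d.\ records giving the leaked record no predictive power over the others. The one substantive difference is that the paper's trigger is permutation-invariant (``$n-1$ records equal $\vec{0}$, output the remaining one''), whereas yours (``output $x_1$ if $x_2=0^k$'') depends on record ordering and on the unstated convention for where $z$ sits in $\ds\cup\{z\}$; under the equally natural convention that $z$ is appended last, your mechanism reveals a record the informed adversary already knows and the ReRo violation evaporates. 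This is easily repaired (make the mechanism symmetric, e.g.\ reveal the unique record whenever all others equal $0^k$), but as written it is a genuine fragility that the paper's choice of trigger deliberately avoids.
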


\begin{proof}

Let $\dd \sim \text{Uniform}\big(\{0, 1\}^k\big)$ be the data distribution over records, and let \mech be the following mechanism:

\[
M(x) = 
\begin{cases} 
x^* & \text{if } \ds \text{ consists of } n-1 \text{ records of } \vec{0} \text{ and additional record } x^*, \\
\bot & \text{otherwise}.
\end{cases}
\]

Consider the goal of reconstruction with exact match, so set the loss function $\ell(x, z) = \dirac[x = z]$ and the reconstruction threshold $\eta = 0$, so a successful reconstruction attack must be exact. Because $\dd$ is uniform over the universe of records, the best oblivious adversary (without access to the mechanism output) could output any record. Without loss of generality, let it be $\vec{1}$. Then the baseline probability of success (either for ReRo or AvgDistReRo) is $\kappa = \frac{1}{2^k}$. Note that with randomizing, the oblivious adversary would yield the same success probability.

The optimal adversary gains advantage over the baseline only when the mechanism $\mech$ outputs a record $x^*$. When the output is $\bot$, similar to with only prior distributional knowledge, the adversary can guess any other fixed record. Thus the adversary's output is:
\[
\adv(\theta) = 
\begin{cases} 
x^* & \text{if } \theta = x^*, \\
\vec{1} & \text{otherwise}.
\end{cases}
\]

The mechanism \mech is not $(0, \gamma)$-ReRo for any $\gamma <1$, meaning that there is a dataset $\mathbf{y}$ of $n-1$ records where an adversary could reconstruct a record $z \sim \dd$ with probability 1 given access to $\mech(\mathbf{y}\cup\{z\})$. If we set $\mathbf{y} = 0^{n-1 \times k}$, then $\mech(\ds)$ returns exactly the missing record $x^*$ -- a successful reconstruction.

However, this mechanism is $(0,\Theta(\frac{1}{2^k}))$-AvgDistReRo, because the contribution of the catastrophic event $\mech(\ds) = x^*$ to the attack success probability is discounted by the probability of sampling the worst-case dataset $\ds$. Let $E$ be the event of sampling dataset $\ds \sim \dd^n$ with at least $n-1$ records equal to $\vec{0}$. Then the probably of success for AvgDistReRo is

\begin{align*}
    \Pr_{\substack{\ds \sim \pi^n \\ \theta \sim \mech(\ds) \\ i \sim \unif[n]}} \big[ \ell(x_i, \adv(\theta)) \leq \eta \big] 
    &= \Pr  [\mathbb{1} [ \ell(x_i, \adv(\theta)) \leq 0 ]] \\
    &= \Pr [ x_i = \adv(\theta) ] \\
    &= \Pr[E] \Pr[x_i = \adv(\theta) | E, \theta = 0] + \Pr[\bar{E}] \Pr[x_i = \adv(\theta) | \bar{E}, \theta = 1] \\
    &= \Pr[E] \Pr[x_i = \vec{0} | E] + \Pr[\bar{E}] \Pr[x_i = \vec{1} | \bar{E}] \\
    &= \frac{1}{2^{(n-1)k}} \cdot \left( \frac{n-1}{n} \cdot 1 + \frac{1}{n} \cdot \frac{1}{2^k} \right) + \left(1 - \frac{1}{2^{(n-1)k}}\right) \cdot \frac{1}{2^k} \\
    &\leq \frac{1}{2^k} + \frac{1}{2^{(n-1)k}}. 
\end{align*}
\end{proof}

This example illustrates why AvgDistReRo is insufficient for achieving ReRo. Low probability datasets under the product distribution $\dd^n$ that enable catastrophic mechanism outputs might have very little impact on the AvgDistReRo success probability, but would produce very poor ReRo guarantees, which are worst-case over datasets.

\section{Discussion}

Our work is motivated by the question of setting the $\eps$ parameter of DP according to real-world attacks. The first step in doing so is to map out the full feature space of attacks.
Without defining this landscape, we are unable to articulate strategies for appropriately balancing privacy and utility in specific deployment contexts.

Our taxonomy fleshes out various types of attacks, from the perspectives of the \textsc{Crafter}, \textsc{Attacker}, and \textsc{Evaluator} (Section~\ref{s.taxonomy}). For each of these roles, there are numerous axes that can impact how we understand the protections offered by $\eps$, such as data generation process, auxiliary knowledge, and baseline of success.

We applied our taxonomy to privacy threat modeling for the recent release of Israel's birth dataset (Section \ref{sec:case-study}).
This process yielded a collection of concrete attacks that articulate realistic privacy threats in this data release, which could be helpful in forming a contextual interpretation of appropriate values of $\eps$ that provide adequate privacy protections. This example highlights that our taxonomy may enhance existing risk management frameworks, such as NIST SP 800-30 \cite{NISTSP800-30Rev1}, when applied to assessing privacy risks.

Furthermore, by categorizing and highlighting which sets of taxonomy's dimensions have been studied by recent papers, our work makes it possible for researchers to focus their future efforts on the dimensions that correspond to practical attacks. We ourselves demonstrate this by putting forth a definition of Distributional Reconstruction Robustness (Section~\ref{s.reconstruction}), which considers a more realistic adversary with only distributional knowledge of the rest of the data, and characterizing the relationships to previous definitions of reconstruction robustness. 

Creating useful metrics for setting $\eps$ may be a longer-term goal for the DP community, but an intermediate and more accessible use of our taxonomy is to simply communicate the real-world protections provided by DP systems. It may be challenging for users of DP systems to understand the relative strength of a single attack, but placing that attack within this taxonomy can provide more insight into the type of protections a system claims to offer. We hope this also allows users to better articulate the types of attacks for which they desire protections, and just as importantly, the types of attacks that are not relevant to the application at hand.

\bibliography{sample}

\begin{thebibliography}{33}
\providecommand{\natexlab}[1]{#1}
\providecommand{\url}[1]{\texttt{#1}}
\expandafter\ifx\csname urlstyle\endcsname\relax
  \providecommand{\doi}[1]{doi: #1}\else
  \providecommand{\doi}{doi: \begingroup \urlstyle{rm}\Url}\fi

\bibitem[Abowd(2018)]{Abowd2018TheUC}
J.~M. Abowd.
\newblock The {U.S. Census Bureau} adopts differential privacy.
\newblock In \emph{Proceedings of the 24th {ACM} {SIGKDD} International
  Conference on Knowledge Discovery {\&} Data Mining}, KDD `18, page 2867,
  2018.

\bibitem[Abowd et~al.(2023)Abowd, Adams, Ashmead, Darais, Dey, Garfinkel,
  Goldschlag, Kifer, Leclerc, Lew, Moore, Rodr'iguez, Tadros, and
  Vilhuber]{abowd20232010}
J.~M. Abowd, T.~Adams, R.~Ashmead, D.~Darais, S.~Dey, S.~L. Garfinkel,
  N.~Goldschlag, D.~Kifer, P.~Leclerc, E.~Lew, S.~Moore, R.~A. Rodr'iguez,
  R.~N. Tadros, and L.~Vilhuber.
\newblock The 2010 {Census} confidentiality protections failed, here's how and
  why.
\newblock Technical report, National Bureau of Economic Research, 2023.

\bibitem[Balle et~al.(2022)Balle, Cherubin, and Hayes]{balle2022reconstructing}
B.~Balle, G.~Cherubin, and J.~Hayes.
\newblock Reconstructing training data with informed adversaries.
\newblock In \emph{43rd {IEEE} Symposium on Security and Privacy}, S\&P `22,
  pages 1138--1156, 2022.

\bibitem[Bargh et~al.(2020)Bargh, Latenko, Braak, Vink, and
  Meijer]{Bargh2020Statistical}
M.~Bargh, A.~Latenko, S.~v.~d. Braak, M.~Vink, and R.~Meijer.
\newblock On statistical disclosure control technologies for protecting
  personal data in tabular data sets.
\newblock Cahiers 2020-17, WODC (Research and Data Centre), Dutch Ministry of
  Justice and Security, 2020.
\newblock URL \url{http://hdl.handle.net/20.500.12832/255}.

\bibitem[Burman et~al.(2019)Burman, Engler, Khitatrakun, Nunns, Armstrong,
  Iselin, MacDonald, and Stallworth]{burman2019safely}
L.~E. Burman, A.~Engler, S.~Khitatrakun, J.~R. Nunns, S.~Armstrong, J.~Iselin,
  G.~MacDonald, and P.~Stallworth.
\newblock Safely expanding research access to administrative tax data: creating
  a synthetic public use file and a validation server.
\newblock Technical report, Internal Revenue Service (IRS), 2019.

\bibitem[Citron and Solove(2022)]{Citron2021PrivacyH}
D.~K. Citron and D.~J. Solove.
\newblock Privacy harms.
\newblock \emph{Boston University Law Review}, 102:\penalty0 793--863, 2022.

\bibitem[Cohen and Nissim(2020)]{cohen2020singling}
A.~Cohen and K.~Nissim.
\newblock Towards formalizing the {GDPR}'s notion of singling out.
\newblock \emph{Proceedings of the National Academy of Sciences}, 117\penalty0
  (15):\penalty0 8344--8352, 2020.

\bibitem[Desfontaines(2021)]{desfontain2021ListRealworld}
D.~Desfontaines.
\newblock {A} list of real-world uses of differential privacy - {T}ed is
  writing things --- desfontain.es.
\newblock
  \url{https://desfontain.es/privacy/real-world-differential-privacy.html},
  2021.

\bibitem[Desfontaines and Pej{\'{o}}(2020)]{desfontaines2019sok}
D.~Desfontaines and B.~Pej{\'{o}}.
\newblock {SoK}: Differential privacies.
\newblock \emph{Proceedings of Privacy Enhancing Technologies}, 2020\penalty0
  (2):\penalty0 288--313, 2020.

\bibitem[Dick et~al.(2023)Dick, Dwork, Kearns, Liu, Roth, Vietri, and
  Wu]{Dick2022ConfidenceRankedRO}
T.~Dick, C.~Dwork, M.~Kearns, T.~Liu, A.~Roth, G.~Vietri, and Z.~S. Wu.
\newblock Confidence-ranked reconstruction of census microdata from published
  statistics.
\newblock \emph{Proceedings of the National Academy of Sciences}, 120\penalty0
  (8):\penalty0 e2218605120, 2023.

\bibitem[Dinur and Nissim(2003)]{Dinur2003RevealingIW}
I.~Dinur and K.~Nissim.
\newblock Revealing information while preserving privacy.
\newblock In F.~Neven, C.~Beeri, and T.~Milo, editors, \emph{Proceedings of the
  Twenty-Second {ACM} {SIGACT-SIGMOD-SIGART} Symposium on Principles of
  Database Systems}, PODS `03, pages 202--210, 2003.

\bibitem[Duncan et~al.(2011)Duncan, Elliot, and
  Salazar-Gonz{\'a}lez]{Duncan2011Statistical}
G.~T. Duncan, M.~Elliot, and J.-J. Salazar-Gonz{\'a}lez.
\newblock \emph{Statistical Confidentiality: Principles and Practice}.
\newblock Statistics for Social and Behavioral Sciences. Springer New York, NY,
  2011.

\bibitem[Dwork et~al.(2006)Dwork, McSherry, Nissim, and
  Smith]{Dwork2006CalibratingNT}
C.~Dwork, F.~McSherry, K.~Nissim, and A.~D. Smith.
\newblock Calibrating noise to sensitivity in private data analysis.
\newblock In \emph{Proceedings of the Theory of Cryptography Conference}, TCC
  `06, pages 265--284, 2006.

\bibitem[Dwork et~al.(2017)Dwork, Smith, Steinke, and
  Ullman]{Dwork2017ExposedAS}
C.~Dwork, A.~D. Smith, T.~Steinke, and J.~Ullman.
\newblock Exposed! a survey of attacks on private data.
\newblock \emph{Annual Review of Statistics and Its Application}, 4:\penalty0
  61--84, 2017.

\bibitem[Fitzpatrick and DeSalvo(2020)]{fitzpatrick2020}
J.~Fitzpatrick and K.~DeSalvo.
\newblock Helping public health officials combat covid-19, Apr 2020.
\newblock URL
  \url{https://blog.google/technology/health/covid-19-community-mobility-reports/}.

\bibitem[Guo et~al.(2022)Guo, Karrer, Chaudhuri, and van~der
  Maaten]{guo2022bounding}
C.~Guo, B.~Karrer, K.~Chaudhuri, and L.~van~der Maaten.
\newblock Bounding training data reconstruction in private (deep) learning.
\newblock In \emph{Proceedings of the International Conference on Machine
  Learning}, volume 162 of \emph{Proceedings of Machine Learning Research},
  pages 8056--8071, 2022.

\bibitem[Hayes et~al.(2023)Hayes, Balle, and Mahloujifar]{hayes2023bounding}
J.~Hayes, B.~Balle, and S.~Mahloujifar.
\newblock Bounding training data reconstruction in {DP-SGD}.
\newblock In \emph{Advances in Neural Information Processing Systems}, NeurIPS
  `23, 2023.

\bibitem[Hod and Canetti(2024)]{hod2024births}
S.~Hod and R.~Canetti.
\newblock Differentially private release of israel's national registry of live
  births, 2024.
\newblock arXiv pre-print 2405.00267.

\bibitem[Homer et~al.(2008)Homer, Szelinger, Redman, Duggan, Tembe, Muehling,
  Pearson, Stephan, Nelson, and Craig]{homer2008resolving}
N.~Homer, S.~Szelinger, M.~Redman, D.~Duggan, W.~Tembe, J.~Muehling, J.~V.
  Pearson, D.~A. Stephan, S.~F. Nelson, and D.~W. Craig.
\newblock Resolving individuals contributing trace amounts of {DNA} to highly
  complex mixtures using high-density {SNP} genotyping microarrays.
\newblock \emph{PLoS genetics}, 4\penalty0 (8):\penalty0 e1000167, 2008.

\bibitem[{Joint Task Force Transformation Initiative}(2012)]{NISTSP800-30Rev1}
{Joint Task Force Transformation Initiative}.
\newblock {Guide for Conducting Risk Assessments}.
\newblock Technical Report 800-30 Rev. 1, National Institute of Standards and
  Technology (NIST), September 2012.
\newblock URL \url{https://doi.org/10.6028/NIST.SP.800-30r1}.

\bibitem[Kaissis et~al.(2023)Kaissis, Hayes, Ziller, and
  Rueckert]{kaissis2023bounding}
G.~Kaissis, J.~Hayes, A.~Ziller, and D.~Rueckert.
\newblock Bounding data reconstruction attacks with the hypothesis testing
  interpretation of differential privacy, 2023.
\newblock arXiv pre-print 2307.03928.

\bibitem[Miklau(2022)]{miklau2022negotiating}
G.~Miklau.
\newblock Negotiating privacy/utility trade-offs under differential privacy.
\newblock \emph{USENIX Conference on Privacy Engineering Practice and Respect},
  2022.

\bibitem[{Ministry of Health - Goverment of Israel}(2024)]{moh2024release}
{Ministry of Health - Goverment of Israel}.
\newblock Israel's national registry of live births, Feburary 2024.
\newblock Official Hebrew repository:
  \url{https://data.gov.il/dataset/birth-data}.\quad Unoffical English version
  website: \url{https://birth.dataset.pub}.

\bibitem[Munoz~Medina et~al.(2023)Munoz~Medina, Dick, Gentile, Busa-Fekete, and
  Swanberg]{dick2023unified}
A.~Munoz~Medina, T.~Dick, C.~Gentile, R.~I. Busa-Fekete, and M.~Swanberg.
\newblock A unified analysis of label inference attacks, 2023.
\newblock Presentd at NeurIPS 2023 Workshop on Regulatable ML.

\bibitem[Narayanan and Shmatikov(2008)]{narayanan2008robust}
A.~Narayanan and V.~Shmatikov.
\newblock Robust de-anonymization of large sparse datasets.
\newblock In \emph{Proceedings of the 2008 IEEE Symposium on Security and
  Privacy}, S\&P `08, pages 111--125, 2008.

\bibitem[Nasr et~al.(2021)Nasr, Song, Thakurta, Papernot, and
  Carlini]{Nasr2021AdversaryIL}
M.~Nasr, S.~Song, A.~Thakurta, N.~Papernot, and N.~Carlini.
\newblock Adversary instantiation: Lower bounds for differentially private
  machine learning.
\newblock In \emph{Proceedings of the 42nd {IEEE} Symposium on Security and
  Privacy}, S\&P `21, pages 866--882, 2021.

\bibitem[Rigaki and Garc{\'{\i}}a(2024)]{Rigaki2020ASO}
M.~Rigaki and S.~Garc{\'{\i}}a.
\newblock A survey of privacy attacks in machine learning.
\newblock \emph{{ACM} Computing Surveys}, 56\penalty0 (4):\penalty0
  101:1--101:34, 2024.

\bibitem[Salem et~al.(2023)Salem, Cherubin, Evans, K{\"{o}}pf, Paverd, Suri,
  Tople, and B{\'{e}}guelin]{salem2023game}
A.~Salem, G.~Cherubin, D.~Evans, B.~K{\"{o}}pf, A.~Paverd, A.~Suri, S.~Tople,
  and S.~Z. B{\'{e}}guelin.
\newblock Sok: Let the privacy games begin! {A} unified treatment of data
  inference privacy in machine learning.
\newblock In \emph{Proceedings of the 44th {IEEE} Symposium on Security and
  Privacy}, S\&P `23, pages 327--345, 2023.

\bibitem[Steinke et~al.(2023)Steinke, Nasr, and Jagielski]{steinke2024privacy}
T.~Steinke, M.~Nasr, and M.~Jagielski.
\newblock Privacy auditing with one {(1)} training run.
\newblock In \emph{Advances in Neural Information Processing Systems}, NeurIPS
  `23, pages 49268--49280, 2023.

\bibitem[Sweeney(1997)]{sweeney1997weaving}
L.~Sweeney.
\newblock Weaving technology and policy together to maintain confidentiality.
\newblock \emph{The Journal of Law, Medicine \& Ethics}, 25\penalty0
  (2-3):\penalty0 98--110, 1997.

\bibitem[Willenborg and Waal(1996)]{WillenborgWaal1996}
L.~Willenborg and T.~Waal.
\newblock \emph{Statistical Disclosure Control in Practice}.
\newblock Lecture Notes in Statistics. Springer New York, NY, 1996.

\bibitem[Wilson et~al.(2020)Wilson, Zhang, Lam, Desfontaines,
  Simmons{-}Marengo, and Gipson]{Wilson2019DifferentiallyPS}
R.~J. Wilson, C.~Y. Zhang, W.~Lam, D.~Desfontaines, D.~Simmons{-}Marengo, and
  B.~Gipson.
\newblock Differentially private {SQL} with bounded user contribution.
\newblock \emph{Proceedings of Privacy Enhancing Technologies}, 2020\penalty0
  (2):\penalty0 230--250, 2020.

\bibitem[Zhang et~al.(2014)Zhang, Cormode, Procopiuc, Srivastava, and
  Xiao]{Zhang2014PrivBayesPD}
J.~Zhang, G.~Cormode, C.~M. Procopiuc, D.~Srivastava, and X.~Xiao.
\newblock Privbayes: private data release via bayesian networks.
\newblock In \emph{Proceedings of the International Conference on Management of
  Data}, SIGMOD `14, pages 1423--1434, 2014.

\end{thebibliography}

\end{document}